\documentclass{iopjournal}

\usepackage{amsmath,amssymb,amsfonts,amsthm,bbm}
\usepackage{bm}
\usepackage{braket}
\usepackage{tikz}
\usepackage{yquant}

\newtheorem{theorem}{Theorem}
\newtheorem{lemma}[theorem]{Lemma}
\newtheorem{proposition}[theorem]{Proposition}

\theoremstyle{definition}
\newtheorem{definition}[theorem]{Definition}

\DeclareMathOperator{\tr}{Tr}

\DeclareMathOperator{\real}{Re}

\DeclareMathOperator{\qft}{QFT}
\newcommand{\id}{\mathbbm{1}}

\begin{document}

\articletype{Paper}

\title{Approximate error correction for quantum simulations of SU(2) lattice gauge theories}

\author{Zachary P Bradshaw}

\affil{QodeX Quantum, Inc., Chicago, IL 60601, USA}

\email{zak@qodexquantum.ai}

\keywords{lattice gauge theory, quantum simulation, quantum error correction, non-abelian gauge symmetry, group quantum Fourier transform}

\begin{abstract}
We present a protocol for actively suppressing Gauss law violations in quantum simulations of SU(2) lattice gauge theory. Mid-circuit measurements extract a syndrome $(J,M,N)$ characterising the gauge-violation sector at each vertex by resolving both the total angular momentum and the magnetic quantum numbers of the violation through a group quantum Fourier transform. A syndrome-conditional recovery operation maps the state back to the gauge-invariant subspace, and the procedure is iterated as a sweep over vertices in a process we call gauge cooling. We prove that every single-qubit Pauli error at a coordination-four vertex with four spin-$1/2$ edges is detected by the gauge syndrome, and we show that the Knill--Laflamme conditions fail for syndrome-based recovery alone whenever the singlet multiplicity exceeds one. The residual physical-subspace errors carry a structured Pauli decomposition with vanishing $Y$ component, which suggests compatibility with concatenation by a CSS stabilizer code. We demonstrate the protocol on a single-plaquette simulation of the Kogut--Susskind Hamiltonian truncated to the spin-$1/2$ representation under depolarising and amplitude damping noise, and we observe that gauge cooling restores approximate gauge invariance and improves fidelity at noise rates representative of current superconducting hardware.
\end{abstract}

\section{Introduction}
\label{sec:intro}

The standard model of particle physics~\cite{gaillard1999standard} relies on non-abelian gauge symmetry as an organising principle~\cite{tong2018gauge}. Quantum chromodynamics (QCD)~\cite{marciano1978quantum,greiner2007quantum}, the SU(3) gauge theory of the strong interaction, confines quarks into hadrons, drives chiral symmetry breaking, and generates the vast majority of the visible mass in the universe. These phenomena are intrinsically non-perturbative and emerge from the strong-coupling infrared dynamics of the theory in a way that cannot be accessed order by order in the coupling constant. Understanding how confinement and mass generation arise from the underlying gauge dynamics remains one of the deepest open problems in theoretical physics more than fifty years after the formulation of QCD.

Lattice gauge theory~\cite{kogut1979introduction,kogut1983lattice,dalmonte2016lattice,rothe2012lattice,wilson1974confinement} has provided the most successful first-principles approach to this problem. Monte Carlo sampling of the lattice path integral has yielded precision calculations of the hadron spectrum~\cite{durr2008ab}, the QCD equation of state at finite temperature~\cite{borsanyi2014full,bazavov2014equation}, and a wide range of hadronic matrix elements relevant to flavour physics and nuclear structure~\cite{aoki2024flag}. Yet there are fundamental questions that classical lattice methods cannot address~\cite{carmen2020}. Real-time dynamics~\cite{berges2007lattice}, transport coefficients~\cite{meyer2011transport}, and the phase structure of QCD at finite baryon density all involve physics that is either inaccessible or severely limited by the sign problem~\cite{troyer2005computational} and the Euclidean signature of the lattice formulation. These are not merely technical inconveniences; they represent a hard boundary on what can be learned from classical computation alone.

Quantum simulation~\cite{georgescu2014quantum,daley2022practical,buluta2009quantum,trabesinger2012quantum,altman2021quantum} has emerged as a potentially transformative tool for circumventing these limitations. A quantum device can in principle represent the real-time dynamics of a quantum field theory directly in Minkowski signature, without analytic continuation and without a sign problem~\cite{feynman2018simulating,jordan2012quantum}. Rapid experimental progress in programmable quantum hardware has brought small-scale demonstrations of lattice gauge theory dynamics within reach~\cite{martinez2016,yang2020observation,mildenberger2025confinement}, and theoretical work has established concrete frameworks for encoding gauge theories into qubit systems~\cite{zohar2016quantum,kogut1975hamiltonian}. SU(2) lattice gauge theory, the simplest non-abelian gauge theory exhibiting both confinement and asymptotic freedom, provides the natural starting point for this program. Success in this setting would establish the viability of quantum simulation as a tool for non-perturbative gauge theory and would lay the groundwork for eventual application to the full complexity of QCD that remains beyond classical reach.

A central obstacle to realizing this program is the preservation of gauge invariance on noisy quantum hardware~\cite{pardo2023,stryker2020,kaplan2020,halimeh2020,halimeh2021,vandamme2023,tran2021,stryker2019,raychowdhury2020,rajput2023,carena2024,martinez2016,nguyen2022}. In the Hamiltonian formulation of lattice gauge theory, the physical Hilbert space is defined by a local constraint at every lattice vertex. This constraint is known as a Gauss law and takes the form of a non-abelian generalization of the requirement that electric flux be divergenceless in the absence of charges. On a classical computer, this constraint is either built into the representation or enforced algebraically, but a quantum processor has no such protection~\cite{kogut1979introduction,halimeh2020,kuhn2014quantum}. Gate errors and decoherence drive the state out of the gauge-invariant subspace, and once a gauge violation is introduced it can propagate and accumulate over subsequent time steps. For abelian gauge theories such as U(1), the Gauss law reduces to a set of commuting diagonal constraints that can be monitored with relatively simple measurements~\cite{zache2018quantum,yang2020observation,klco2018quantum}. In the non-abelian case, the constraint at each vertex involves non-commuting generators, and the gauge-invariant subspace has a richer structure that cannot be diagnosed by a single diagonal measurement. Maintaining gauge invariance in a non-abelian simulation is therefore a distinct challenge that must be solved before quantum advantage for gauge theory dynamics can be credibly pursued.

In this paper we present a protocol for actively suppressing Gauss law violations in quantum simulations of SU(2) lattice gauge theory. Mid-circuit measurements extract a syndrome $(J,M,N)$ characterizing the gauge-violation sector at each vertex, and a conditional recovery operation maps the state back to the physical subspace through an iterative sweep we call gauge cooling. The protocol refines the symmetry-based codes introduced in~\cite{bradshaw2025symmetry} by replacing the isotypic syndrome extraction with a finer-grained measurement that supplies maximal syndrome information for the recovery step. We then carry out a full Knill--Laflamme analysis at coordination-four vertices, where the singlet multiplicity is non-trivial. We prove that every single-qubit Pauli error is detected by the gauge syndrome, and we show by direct construction that the Knill--Laflamme conditions for exact recovery fail; distinct edges produce the same gauge syndrome but act distinguishably on the multiplicity space. The residual physical-subspace errors nevertheless carry a structured Pauli decomposition with vanishing $Y$ component, which suggests compatibility with concatenation by a CSS stabilizer code~\cite{gottesman1997stabilizer,bradshaw2025introduction}. We demonstrate the protocol numerically on a single-plaquette simulation of the Kogut--Susskind Hamiltonian truncated to the spin-$1/2$ representation under depolarizing and amplitude damping noise at rates representative of current superconducting hardware.

The remainder of the paper is organized as follows. In section~\ref{sec:lgt}, we review the Hamiltonian formulation of lattice gauge theory, fix conventions, and identify the obstruction to syndrome-based correction in the non-abelian case. In section~\ref{sec:syndrome}, we construct the syndrome extraction circuit using the group quantum Fourier transform and analyze its action on the data register. Section~\ref{sec:cooling} introduces the recovery operation and the iterative sweep procedure. In section~\ref{sec:kl}, we carry out the Knill--Laflamme analysis at coordination-four vertices, prove the detection theorem, exhibit the explicit failure of the recovery conditions, and characterize the residual errors. Section~\ref{sec:numerics} reports the single-plaquette numerical demonstration, including details of the Hamiltonian construction, Trotterization, and noise models. We conclude in section~\ref{sec:conclusion}. Three appendices contain the technical proofs supporting the main text; appendix~\ref{app:tdesign} establishes the unitary $t$-design condition under which the truncated syndrome extraction is exact, appendix~\ref{app:qft} treats the truncation of the group quantum Fourier transform, and appendix~\ref{app:recovery} describes the explicit construction of the recovery operator on the single-plaquette geometry.

\section{Lattice gauge theory and the Gauss law constraint}
\label{sec:lgt}

We begin with the Hamiltonian formulation of lattice gauge theory. Let $\Lambda = (V, E)$ be a finite spatial lattice with vertex set $V$ and oriented edge set $E$, and let $G$ be a compact Lie group. To each edge $e \in E$ we associate the Hilbert space
\begin{equation}
    \mathcal{H}_e = L^2(G),
\end{equation}
the space of square-integrable functions on $G$ with respect to the Haar measure~\cite{diestel2014joys,mele2024introduction}. Matter degrees of freedom may optionally reside on vertices, in which case each $v \in V$ carries a finite-dimensional Hilbert space $\mathcal{H}^{\mathrm{matt}}_v$ with a unitary representation $\rho_v \colon G \to U(\mathcal{H}^{\mathrm{matt}}_v)$. The full Hilbert space is
\begin{equation}
    \mathcal{H} = \bigotimes_{e \in E} \mathcal{H}_e
    \;\otimes\; \bigotimes_{v \in V} \mathcal{H}^{\mathrm{matt}}_v,
\end{equation}
and the local gauge group is the direct product
\begin{equation}
    \mathcal{G} = \prod_{v \in V} G_v \cong G^{|V|},
\end{equation}
where the subscript $v$ indexes the copies of $G$.

An element $\mathbf{h} = (h_v)_{v \in V} \in \mathcal{G}$ acts on an oriented edge $e$ from vertex $v$ to vertex $w$ via
\begin{equation}
    U_e(\mathbf{h}) = L_{h_v}\, R_{h_w},
\end{equation}
where $L_{h_v}$ and $R_{h_w}$ are the commuting left and right regular representations defined by
\begin{equation}
    (L_{h_v} f)(g) = f(h_v^{-1}\, g),
    \qquad
    (R_{h_w} f)(g) = f(g\, h_w).
\end{equation}
The element $\mathbf{h}$ also acts on matter via $U_v(\mathbf{h}) = \rho_v(h_v)$, and taking the tensor product over all edges and vertices yields a unitary representation $U \colon \mathcal{G} \to U(\mathcal{H})$. A state $\ket{\Psi} \in \mathcal{H}$ is \emph{physical} if it is invariant under all local gauge transformations,
\begin{equation}\label{eq:gauss_law}
    U(\mathbf{h})\ket{\Psi} = \ket{\Psi}
    \qquad \forall \, \mathbf{h} \in \mathcal{G}.
\end{equation}
The physical subspace $\mathcal{H}_{\mathrm{phys}} \subset \mathcal{H}$ is therefore the $\mathcal{G}$-invariant subspace, which in the language of representation theory is the trivial isotypic component of $\mathcal{H}$ under the action of $\mathcal{G}$.

The projector onto the physical subspace is given by the group average
\begin{equation}\label{eq:phys_projector}
    \Pi_{\mathrm{phys}} = \int_{\mathcal{G}} U(\mathbf{h})\,
    d\mathbf{h},
\end{equation}
where $d\mathbf{h}$ is the normalized Haar measure on $\mathcal{G}$. Since $\mathcal{G}$ factorizes over vertices, this integral factorizes as well,
\begin{equation}\label{eq:factorized_projector}
    \Pi_{\mathrm{phys}} = \prod_{v \in V} \Pi^{(v)}_0,
    \qquad
    \Pi^{(v)}_0 = \int_G U^{(v)}(h_v)\, dh_v,
\end{equation}
where $U^{(v)}(h_v) := U(1, \ldots, 1, h_v, 1, \ldots, 1)$ denotes the gauge action with $h_v$ non-trivial at vertex $v$ only. A projective measurement of $\{\Pi^{(v)}_0,\, \id - \Pi^{(v)}_0\}$ at each vertex therefore detects gauge-violating errors. A method for performing this measurement when $G$ is a finite group can be found in~\cite{laborde2023testing,bradshaw2023cycle,laborde2024projectors}, and a circuit implementation is shown in figure~\ref{fig:gbose_circuit}. This information alone, however, is insufficient for active error correction beyond post-selection on the physical outcome, which is computationally expensive~\cite{lamm2020suppressing,knill2005quantum}. To perform an active recovery operation one requires a syndrome that identifies the type of error that occurred without destroying the information in the state of the system. We now describe how to make such a measurement when $G = \mathrm{SU}(2)$.

\begin{figure}[t]
\centering
\begin{tikzpicture}
\begin{yquant}
  qubit {$\ket{0}$} anc;
  qubit {$\ket{\psi}$} data;
  box {$F_G$} anc;
  box {$U^{(v)}$} data | anc;
  box {$F_G^\dagger$} anc;
  measure anc;
\end{yquant}
\end{tikzpicture}
\caption{Circuit for the Gauss law test at a single vertex $v$. The ancillary register is prepared in the uniform superposition $\ket{+_G} = |G|^{-1/2} \sum_g \ket{g}$ by the unitary $F_G$. The controlled gauge action applies $U^{(v)}(g)$ to the data register conditioned on the ancilla state $\ket{g}$. The inverse $F_G^\dagger$ is applied and the ancilla is measured; the outcome corresponding to $\ket{0}$ is the ``accept'' outcome, with acceptance probability $\bra{\psi} \Pi^{(v)}_0 \ket{\psi}$.}
\label{fig:gbose_circuit}
\end{figure}
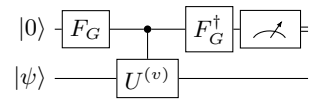

By the Peter--Weyl theorem, the link Hilbert spaces $\mathcal{H}_e = L^2(\mathrm{SU}(2))$ decompose as
\begin{equation}\label{eq:pw_edge}
    \mathcal{H}_e \cong \bigoplus_{j \in \frac{1}{2}\mathbb{N}_0}
    V_j \otimes V_j^*,
\end{equation}
where $V_j$ is the carrier space of the spin-$j$ irreducible representation (irrep) $\pi_j$ with dimension $d_j = 2j+1$ and $\mathbb{N}_0$ denotes the non-negative integers. The normalized Wigner matrix elements $\sqrt{d_j}\, D^{(j)}_{mn}(g) = \sqrt{2j+1}\, [\pi_j(g)]_{mn}$ with $-j \le m, n \le j$ form a complete orthonormal basis for $\mathcal{H}_e$. In practice the sum over $j$ is truncated at $j_{\mathrm{max}}$, reducing each edge to a finite-dimensional Hilbert space of dimension $\sum_{j=0}^{j_{\mathrm{max}}} (2j+1)^2$. Each summand $V_j \otimes V_j^*$ in~\eqref{eq:pw_edge} carries a physical interpretation; for an oriented edge $e$ the factor $V_{j_e}$ transforms under gauge transformations at the source vertex and $V_{j_e}^*$ transforms at the target.

Regrouping the tensor product over edges by vertex and applying the Clebsch--Gordan decomposition (see appendix~\ref{app:tdesign} for the full derivation), we obtain the vertex Hilbert space at each vertex $v$,
\begin{equation}\label{eq:vertex_decomp}
    \mathcal{H}_v\bigl(\{j_e\}\bigr) \cong \bigoplus_{J_v}
    V_{J_v} \otimes \mathbb{C}^{\mu_{J_v}},
\end{equation}
where $J_v$ labels the total angular momentum at vertex $v$ and $\mu_{J_v}$ is the corresponding Clebsch--Gordan multiplicity. Fixing a standard basis $\{\ket{M}\}_{M=-J}^{J}$ for $V_J$, we define the subspace
\begin{equation}\label{eq:WMJ}
    \mathcal{W}_M^J = \mathrm{span}\bigl\{\ket{M} \otimes \ket{\alpha}
    : \alpha = 1, \ldots, \mu_J\bigr\}
    \subset V_J \otimes \mathbb{C}^{\mu_J},
\end{equation}
which is the $\mu_J$-dimensional subspace of the $J$-isotypic component with magnetic quantum number $M$. The Gauss law constraint requires that the physical subspace correspond to the singlet sector $J_v = 0$ at every vertex. The quantum numbers $J_v$ and the associated magnetic projections together characterize gauge violations at vertex $v$, and it is the extraction of these quantum numbers by mid-circuit measurement that enables the active correction protocol.

\section{Syndrome extraction by group quantum Fourier transform}
\label{sec:syndrome}

The syndrome extraction protocol uses an ancillary register coupled to the data register through the controlled gauge action, followed by a group quantum Fourier transform and a basis measurement. The circuit is shown in figure~\ref{fig:wigner_circuit} and proceeds in four steps.

\begin{figure}[t]
\centering
\begin{tikzpicture}
\begin{yquant}
  qubit {$\ket{0}$} anc;
  qubit {$\ket{\psi}$} data;
  box {$F_{\mathrm{SU(2)}}$} anc;
  box {$U^{(v)}(g)$} data | anc;
  box {$\mathrm{QFT}_{\mathrm{SU(2)}}$} anc;
  measure anc;
\end{yquant}
\end{tikzpicture}
\caption{Syndrome extraction at vertex $v$. The circuit is identical to that in figure~\ref{fig:gbose_circuit} except that the inverse preparation $F_{\mathrm{SU(2)}}^\dagger$ is replaced by the group quantum Fourier transform $\mathrm{QFT}_{\mathrm{SU(2)}}$. Measuring the ancillary register in the Wigner basis yields the syndrome $(J,M,N)$. The operator applied to the data register selects the component in $\mathcal{W}_N^J$ and maps it into $\mathcal{W}_M^J$, leaving the multiplicity degrees of freedom untouched.}
\label{fig:wigner_circuit}
\end{figure}
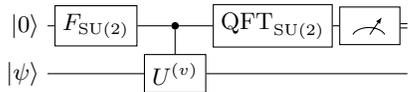

\subsection{Step 1, prepare the ancillary register}

Ideally one would initialize an ancillary register in a uniform superposition over the basis for the group algebra generated by $\mathrm{SU}(2)$, but finite-dimensional hardware forces a truncation. We show in appendix~\ref{app:tdesign} that preparing a uniform superposition over a unitary $t$-design is sufficient. Such a superposition has the form
\begin{equation}\label{eq:t_design_super}
    \frac{1}{\sqrt{n_t}} \sum_{i=1}^{n_t} \ket{g_i}
\end{equation}
for some $n_t$ group elements $g_i \in \mathrm{SU}(2)$. The required design strength $t$ depends on the truncation $j_{\mathrm{cut}}$ of the group QFT and on the coordination of the vertex; the precise condition is established in appendix~\ref{app:tdesign}. We may encode the group basis states however we like, so long as the controlled-$U^{(v)}(g_i)$ operation in figure~\ref{fig:wigner_circuit} is triggered when the ancillary register is in the state $\ket{g_i}$ and the final QFT is performed according to the same basis assignment.

\subsection{Step 2, apply the controlled gauge action}

Perform the operation
\begin{equation}\label{eq:controlled_gauge}
    \sum_{i=1}^{n_t} \ket{g_i}\!\bra{g_i}
    \otimes U^{(v)}(g_i),
\end{equation}
which applies the gauge action $U^{(v)}(g_i)$ at vertex $v$ to the data register, conditioned on the ancillary register being in state $\ket{g_i}$. After this step the joint state of the ancilla and data registers for an input state $\ket{\psi}$ is
\begin{equation}\label{eq:joint_state}
    \frac{1}{\sqrt{n_t}}
    \sum_{i=1}^{n_t}
    \ket{g_i} \otimes U^{(v)}(g_i) \ket{\psi}.
\end{equation}

\subsection{Step 3, apply the group quantum Fourier transform}

The quantum Fourier transform admits a generalization to a group quantum Fourier transform, where the ordinary transform is recovered by choosing a cyclic group. For $\mathrm{SU}(2)$, this operation is defined by
\begin{equation}\label{eq:qft}
    \qft_{\mathrm{SU(2)}} \ket{g}
    = \sum_{j \in \frac{1}{2}\mathbb{N}_0} \sqrt{2j+1}
    \sum_{m, n = -j}^{j}
    [\overline{\pi_j(g)}]_{m,n}\,
    \ket{j, m, n},
\end{equation}
which acts on an infinite-dimensional Hilbert space. We again encounter the finite-dimensional hardware limitation, which we alleviate by truncating the QFT along the $j$ axis. Defining a cutoff $j_{\mathrm{cut}}$, the truncated transform acts as
\begin{equation}\label{eq:qft_truncated}
    \qft_{\mathrm{SU(2)}}^{\le j_{\mathrm{cut}}} \ket{g}
    = \sum_{j=0}^{j_{\mathrm{cut}}} \sqrt{\frac{2j+1}{n_t}}
    \sum_{m, n = -j}^{j}
    [\overline{\pi_j(g)}]_{m,n}\,
    \ket{j, m, n},
\end{equation}
where $j$ ranges over half-integer steps and the factor of $1/\sqrt{n_t}$ provides the appropriate normalisation. Justification of this truncation is deferred to appendix~\ref{app:qft}, where we prove that the truncated transform is an isometry from the $n_t$-dimensional ancilla register into the subspace spanned by $\{\ket{j,m,n} : j \le j_{\mathrm{cut}},\, -j \le m, n \le j\}$ and is embedded into a unitary on the full ancilla register by extension to an orthonormal basis. The relationship between the group algebra basis $\{\ket{g}\}$ and the Wigner basis $\{\ket{j,m,n}\}$ is discussed in appendix~\ref{app:bases}.

\subsection{Step 4, measure the ancillary register}

Measure the ancillary register in the $\ket{j,m,n}$ basis, obtaining outcome $(J, M, N)$. To determine the post-measurement state of the data register, we substitute the output of step 2 into step 3 and project onto the measurement outcome. After the group QFT, the joint state of the ancilla and data registers is
\begin{equation}\label{eq:after_qft}
    \sum_{j, m, n} \ket{j, m, n} \otimes T_{mn}^{(j)} \ket{\psi},
\end{equation}
where
\begin{equation}\label{eq:T_operator}
    T_{mn}^{(j)} = \frac{\sqrt{2j+1}}{n_t}
    \sum_{i=1}^{n_t} [\overline{\pi_j(g_i)}]_{mn}\,
    U^{(v)}(g_i)
\end{equation}
is the discrete approximation to the operator
\begin{equation}\label{eq:T_haar}
    \sqrt{2j+1} \int_{\mathrm{SU(2)}}
    \overline{D^{(j)}_{mn}(g)}\, U^{(v)}(g)\, dg,
\end{equation}
which, by the Peter--Weyl orthogonality relations, acts within the $j$-isotypic component of $\mathcal{H}_v$ as
\begin{equation}\label{eq:T_form}
    T_{mn}^{(j)} = \frac{1}{\sqrt{2j+1}}\,
    \ket{m}\!\bra{n} \otimes \id_{\mu_j}.
\end{equation}
Upon obtaining outcome $(J, M, N)$, the data register is left in the unnormalized state $T_{MN}^{(J)} \ket{\psi}$. This operator selects the component of $\ket{\psi}$ in the subspace $\mathcal{W}_N^J$ and maps it into $\mathcal{W}_M^J$, leaving the multiplicity degrees of freedom untouched. The outcome occurs with probability
\begin{equation}\label{eq:outcome_prob}
    p(J, M, N) = \bra{\psi} \bigl(T_{MN}^{(J)}\bigr)^\dagger
    T_{MN}^{(J)} \ket{\psi}
    = \frac{1}{2J+1}\, \bra{\psi} P_N^J \ket{\psi},
\end{equation}
where $P_N^J = \ket{N}\!\bra{N} \otimes \id_{\mu_J}$ is the projector onto $\mathcal{W}_N^J$ within the $J$-isotypic component. The probability is independent of $M$, so conditioned on $J$ and $N$ the outcome $M$ is uniformly distributed over $\{-J, \ldots, J\}$. For a gauge-invariant input state, $J = M = N = 0$ with certainty and $T_{00}^{(0)}$ acts as the identity on the singlet sector.

\section{Gauge cooling}
\label{sec:cooling}

Having extracted the syndrome $(J, M, N)$ at a vertex, we now describe the recovery operation. If the measurement outcome is $(J, M, N) = (0, 0, 0)$, no gauge violation has occurred and no action is taken. Otherwise we apply a recovery unitary $R_{J, M}$ to the data register that maps the subspace $\mathcal{W}_M^J$ into $\mathcal{W}_0^0$. Concretely $R_{J, M}$ is any unitary on $\mathcal{H}_v$ satisfying
\begin{equation}\label{eq:recovery}
    R_{J, M}\, \ket{M} \otimes \ket{\alpha}
    = \ket{0} \otimes \ket{\alpha},
    \qquad
    \alpha = 1, \ldots, \min(\mu_J, \mu_0),
\end{equation}
where $\ket{M} \otimes \ket{\alpha}$ is understood as a vector in the summand $V_J \otimes \mathbb{C}^{\mu_J} \subset \mathcal{H}_v$ and $\ket{0} \otimes \ket{\alpha}$ as a vector in $V_0 \otimes \mathbb{C}^{\mu_0} \subset \mathcal{H}_v$, with the action on the remaining basis vectors chosen to complete a unitary. When $\mu_J > \mu_0$ the recovery maps only $\mu_0$ of the $\mu_J$ multiplicity states back to the singlet sector; the remaining $\mu_J - \mu_0$ states are mapped to an orthogonal subspace and their information is lost. The recovery depends on $J$ and $M$ but not on $N$; the third index records which subspace the state occupied before measurement, but after measurement the data register already resides in $\mathcal{W}_M^J$ regardless of $N$.

\begin{definition}[Gauge cooling]
\label{def:gauge_cooling}
The composition of syndrome extraction (steps 1--4 of section~\ref{sec:syndrome}) with the conditional recovery in~\eqref{eq:recovery} constitutes the \emph{gauge cooling} operation at a single vertex. \emph{Iterative gauge cooling} on the lattice $\Lambda$ refers to the procedure in which gauge cooling is applied at each vertex in sequence and the full sweep is repeated until a prescribed convergence criterion on the gauge-invariant overlap is met or a maximum number of sweeps is reached.
\end{definition}

The iterative structure is necessary because the recovery at one vertex modifies the edge spins on the edges incident to that vertex, which in turn can introduce gauge violations at neighboring vertices that share those edges. We define the gauge-invariant overlap
\begin{equation}\label{eq:gi_overlap}
    \mathcal{F}_{\mathrm{GI}}(\rho) = \frac{1}{|V|}
    \sum_{v \in V} \tr\!\left(\Pi_0^{(v)}\, \rho\right),
\end{equation}
where $\Pi_0^{(v)}$ is the projector onto the $J_v = 0$ sector at vertex $v$. A state that is exactly gauge-invariant at every vertex has $\mathcal{F}_{\mathrm{GI}} = 1$. In the simulations of section~\ref{sec:numerics} the iterative sweep is terminated when $\mathcal{F}_{\mathrm{GI}} > 1 - \epsilon$ for a tolerance $\epsilon = 10^{-5}$ or after a maximum of ten sweeps.

The convergence of iterative gauge cooling is geometric in the single-plaquette geometry studied numerically in section~\ref{sec:numerics}, with a contraction factor of approximately $0.45$ per sweep. Each sweep is a completely positive trace-preserving map on the full density matrix, and the gauge-invariant subspace is a fixed point. The gauge-violating component is reduced by each sweep at a rate governed by the spectral gap of the sweep map restricted to the orthogonal complement of the physical subspace. The number of sweeps required for convergence on larger lattices is expected to depend on the lattice size and connectivity, since information about gauge violations must propagate across shared edges. This is analogous to the convergence of iterative gauge fixing in classical lattice gauge theory, where Landau gauge is imposed by sweeping over vertices and maximizing a local functional, with the number of sweeps growing polynomially with the linear lattice size~\cite{mandula1987gluon,cucchieri2003critical}. A systematic study of this scaling is left to future work.

The gauge cooling operation admits a Kraus representation that we record for use in section~\ref{sec:kl}. After the spectator-preserving basis assignment described in appendix~\ref{app:recovery}, the Kraus operators take the form
\begin{equation}\label{eq:kraus_gc}
    K_{J, N} = \sum_{\alpha=1}^{\min(\mu_J,\, \mu_0)}
    \ket{0, 0, \alpha}\!\bra{J, N, \alpha},
\end{equation}
where the sum runs over the multiplicity index. The full gauge cooling channel at a single vertex is therefore
\begin{equation}\label{eq:gc_channel}
    \mathcal{E}(\rho) = \sum_{J, N} K_{J, N}\, \rho\,
    K_{J, N}^\dagger,
\end{equation}
which is trace-preserving; $\sum_{J, N} K_{J, N}^\dagger K_{J, N} = \sum_{J, N} P_N^J = \id$.

\section{Knill--Laflamme analysis at coordination-four vertices}
\label{sec:kl}

The single-plaquette demonstration in section~\ref{sec:numerics} uses coordination-two vertices, where the singlet multiplicity within each edge-spin block is $\mu_0 = 1$ and gauge cooling constitutes exact error correction within each block. On larger lattices with square or higher topology, vertices have coordination number four or greater and the singlet multiplicity grows beyond one. The physical subspace at such a vertex carries non-trivial degrees of freedom, and the question of whether gauge cooling can protect those degrees of freedom becomes substantive. We answer this question for the simplest non-trivial case; a single coordination-four vertex with four spin-$1/2$ edges. The analysis establishes three structural results; every single-qubit Pauli error is detected (theorem~\ref{thm:detection}), the Knill--Laflamme conditions for exact recovery fail (proposition~\ref{prop:kl_failure}), and the residual physical-subspace errors carry a structured Pauli decomposition.

\subsection{Setup}
\label{sec:kl_setup}

Consider a single vertex $v$ with four incident edges, each carrying spin $j_{\mathrm{max}} = 1/2$. The gauge transformation at $v$ acts on the tensor product of the four spin-$1/2$ representations contributed by these edges. Since we are interested only in the indices that transform under the gauge group at $v$, the relevant Hilbert space is $(V_{1/2})^{\otimes 4} \cong (\mathbb{C}^2)^{\otimes 4}$, which is sixteen-dimensional.

To decompose this space into sectors of definite total angular momentum $J$, we choose the coupling scheme $(12)(34) \to J$; first couple edges 1 and 2 to an intermediate spin $j_{12} \in \{0, 1\}$, then couple edges 3 and 4 to $j_{34} \in \{0, 1\}$, and finally couple $j_{12}$ and $j_{34}$ to the total $J$. The result is the standard Clebsch--Gordan decomposition
\begin{equation}\label{eq:coord4_decomp}
    \left(\tfrac{1}{2}\right)^{\otimes 4}
    \cong 0^2 \oplus 1^3 \oplus 2^1,
\end{equation}
where the superscripts denote multiplicities. This decomposition is independent of the coupling order; a different scheme such as $(13)(24) \to J$ yields the same multiplicities but labels the states differently.

The singlet sector $J = 0$ has multiplicity $\mu_0 = 2$, so the physical subspace at this vertex is two-dimensional. The two singlet states correspond to the intermediate spin assignments $(j_{12}, j_{34}) = (0, 0)$, in which both pairs form singlets, and $(j_{12}, j_{34}) = (1, 1)$, in which both pairs form triplets that then couple to a total singlet. The $J = 1$ sector has multiplicity $\mu_1 = 3$, with the three channels labeled by $(j_{12}, j_{34}) = (0, 1),\, (1, 0),\, (1, 1)$. Together with the $2J + 1 = 3$ magnetic substates this gives $3 \times 3 = 9$ states. The $J = 2$ sector has $\mu_2 = 1$, giving $2J + 1 = 5$ states. The total is $2 + 9 + 5 = 16$, consistent with the dimension of $(\mathbb{C}^2)^{\otimes 4}$.

\subsection{Detection of single-qubit errors}
\label{sec:detection}

We now analyze how single-qubit errors transform under the gauge group. Any operator on a single spin-$1/2$ edge decomposes into irreducible spherical tensor components under the adjoint action of the gauge group. The space of all $2 \times 2$ matrices is four-dimensional and decomposes as $V_{1/2} \otimes V_{1/2} \cong V_0 \oplus V_1$, that is, a one-dimensional scalar ($J = 0$) piece and a three-dimensional vector ($J = 1$) piece. The scalar piece is the identity matrix $\id_2$, the unique operator commuting with all SU(2) transformations by Schur's lemma. The vector piece is the three-dimensional space of traceless Hermitian matrices spanned by the Pauli matrices $\{X, Y, Z\}$.

Since the Pauli matrices are traceless, a single-qubit Pauli error $P \in \{X, Y, Z\}$ on edge $k$ is purely $J = 1$ and maps the singlet sector entirely into the $J = 1$ sector. The $J = 1$ sector has three magnetic substates $M = -1, 0, +1$. To determine which substate the error produces, we re-express the Pauli matrices in the spherical tensor basis whose elements have definite magnetic quantum number $q$,
\begin{equation}\label{eq:spherical_pauli}
    O^{(1)}_0 = Z,
    \qquad
    O^{(1)}_{\pm 1} = \mp\frac{1}{\sqrt{2}}\bigl(X \pm i Y\bigr).
\end{equation}
When a spherical tensor with component $q$ acts on a singlet $M = 0$, the selection rule $\Delta M = q$ determines the magnetic quantum number of the output state. The operator $Z$, being the $q = 0$ component, maps the singlet sector into $M = 0$ of $J = 1$; the operators $X$ and $Y$, being linear combinations of the $q = \pm 1$ components, map the singlet sector into the span of the $M = \pm 1$ subspaces of $J = 1$. This decomposition by $M$ sector is essential for the Knill--Laflamme analysis that follows, because errors landing in different $M$ sectors produce different syndromes and are handled by different recovery operations.

No single-qubit error can reach $J = 2$, because the operator space on a single spin-$1/2$ decomposes as $0 \oplus 1$ and therefore cannot induce $J = 2$ transitions. Two-qubit errors acting on two edges simultaneously can reach $J = 2$.

\begin{theorem}[Detection]
\label{thm:detection}
Let $\Pi_0$ denote the projector onto the singlet sector at a coordination-four vertex with four spin-$1/2$ edges. For every Pauli operator $P \in \{X, Y, Z\}$ and every edge $k \in \{0, 1, 2, 3\}$,
\begin{equation}\label{eq:detection}
    \Pi_0\, \bigl(P_k \otimes \id_{\bar k}\bigr)\, \Pi_0 = 0,
\end{equation}
where $\bar k$ denotes the complement of edge $k$.
\end{theorem}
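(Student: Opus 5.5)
The plan is to prove the statement by a representation-theoretic selection rule and then back it up with an explicit check in the coupled basis. Let $U(h) = \pi_{1/2}(h)^{\otimes 4}$ be the gauge action at $v$ on $(\mathbb{C}^2)^{\otimes 4}$. The projector onto the singlet sector is $\Pi_0 = \int_{\mathrm{SU}(2)} U(h)\,dh$, and it commutes with every $U(h)$. The key structural fact is the one recorded just before the statement: under conjugation, $U(h)\,(P_k\otimes\id_{\bar k})\,U(h)^\dagger = (\pi_{1/2}(h) P \pi_{1/2}(h)^\dagger)_k \otimes \id_{\bar k}$. By the decomposition $V_{1/2}\otimes V_{1/2}^*\cong V_0\oplus V_1$, the traceless operators $\{X,Y,Z\}$ span the $J=1$ (adjoint) block. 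So $P_k\otimes\id_{\bar k}$ is a component of a rank-one spherical tensor operator on the sixteen-dimensional space.

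First I would show that $\Pi_0 A \Pi_0$ equals $\Pi_0 \bar A \Pi_0$, where $\bar A = \int U(h) A U(h)^\dagger\,dh$ is the twirl of $A$. This holds because $\Pi_0 U(h) = \Pi_0 = U(h)\Pi_0$, so inserting $U(h)$ and $U(h)^\dagger$ and averaging over $h$ changes nothing. Next I would compute the twirl for $A = P_k\otimes\id_{\bar k}$. Since $P_k\otimes\id_{\bar k}$ is linear in $P$ and the twirl commutes with the tensor-with-identity embedding, this reduces to the single-qubit twirl $\int \pi_{1/2}(h) P \pi_{1/2}(h)^\dagger\,dh$. That integral is a $\mathrm{SU}(2)$-invariant element of the adjoint representation $V_1$. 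Because $V_1$ is irreducible and nontrivial, Schur's lemma (equivalently, the Peter--Weyl orthogonality invoked for \eqref{eq:T_haar}) forces the invariant vector to be zero. Concretely, the twirl equals $\tfrac12\tr(P)\,\id_2 = 0$. Hence $\bar A = 0$, and therefore $\Pi_0 A\Pi_0 = 0$.

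As an independent sanity check, and to connect with the coupling scheme $(12)(34)\to J$ of section~\ref{sec:kl_setup}, I would also verify the claim directly. Since $\sigma_k$ is a vector operator, the Wigner--Eckart theorem gives $\langle 0,0,\alpha|\,O^{(1)}_q\,|0,0,\beta\rangle \propto \langle 0\,0;1\,q|0\,0\rangle$. This Clebsch--Gordan coefficient vanishes because the triangle rule forbids coupling $1$ and $0$ to $0$. Applying this to each component in \eqref{eq:spherical_pauli} covers $Z$ ($q=0$) and $X,Y$ ($q=\pm1$) for every edge $k$.

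I do not expect a real obstacle here. The one step needing care is justifying that $\Pi_0$ is exactly the Haar average of $U(h)$ on this finite-dimensional space, so that $\Pi_0 U(h)=\Pi_0$. This is immediate from \eqref{eq:factorized_projector} restricted to the indices transforming at $v$. I would also note that the argument never uses $\mu_0=2$, so the result holds for any coordination and any spins, provided the error is a traceless single-edge operator.
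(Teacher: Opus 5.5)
Your proof is correct, but its main line differs from the paper's. The paper's proof is exactly what you keep as a sanity check: by Wigner--Eckart, the singlet-to-singlet matrix elements of each component $O^{(1)}_q$ on edge $k$ carry the factor $\langle 0,0;1,q|0,0\rangle$, which the triangle rule kills.

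Your primary argument instead uses $\Pi_0=\int U(h)\,dh$ to replace $P_k\otimes\id_{\bar k}$ by its twirl. That twirl localises on edge $k$ and equals $\tfrac12\tr(P)\,\id_2\otimes\id_{\bar k}=0$ by Schur's lemma. This route has three advantages:
\begin{itemize}
\item It is more elementary. It needs no coupled basis and no multiplicity-aware Wigner--Eckart theorem. The paper's displayed formula suppresses the multiplicity labels that your $\langle 0,0,\alpha|O^{(1)}_q|0,0,\beta\rangle$ keeps.
\item It uses only the irreducibility of the representation that edge $k$ carries at $v$. So it does not matter whether that edge enters as $V_{1/2}$ or $V_{1/2}^*$.
\item It isolates the real content: only the scalar part of a single-edge error survives compression by $\Pi_0$.
\end{itemize}

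Your closing generalisation holds for fixed edge spins, with the error acting on the irreducible factor that edge $k$ contributes at $v$. On a full truncated link space $\bigoplus_j V_j\otimes V_j^*$, the action at $v$ is reducible, and the right condition is a vanishing twirl rather than tracelessness. For example, the traceless part of the single-edge Casimir is gauge invariant, hence undetected.

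In exchange, the paper's route gives finer information that it uses immediately afterwards. The error lands in $J=1$ with $M=q$, and the reduced matrix elements are precisely the multiplicity maps $A_k^{(q)}$ of the Knill--Laflamme analysis.
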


\begin{proof}
By the Wigner--Eckart theorem, the matrix element of a rank-$J$ spherical tensor operator $T^{(J)}_q$ between states of total angular momentum $J_a$ and $J_b$ factorizes as a Clebsch--Gordan coefficient times a reduced matrix element,
\begin{equation}
    \langle J_b, M_b \,|\, T^{(J)}_q \,|\, J_a, M_a \rangle
    = \langle J_a, M_a;\, J, q \,|\, J_b, M_b \rangle\,
      \langle J_b \,\|\, T^{(J)} \,\|\, J_a \rangle.
\end{equation}
The Clebsch--Gordan coefficient vanishes unless $J_b$ appears in the decomposition of $V_{J_a} \otimes V_J$, which requires the triangle condition $|J_a - J| \le J_b \le J_a + J$. Since Pauli operators are traceless and therefore purely rank $J = 1$ as established in~\eqref{eq:spherical_pauli}, and since the singlet sector has $J_a = J_b = 0$, the triangle condition requires $|0 - 1| \le 0 \le 0 + 1$, that is, $1 \le 0$, which is false. The Clebsch--Gordan coefficient therefore vanishes for every choice of magnetic quantum numbers, and equation~\eqref{eq:detection} follows.
\end{proof}

The physical consequence of Theorem~\ref{thm:detection} is that every single-qubit Pauli error necessarily produces a gauge violation $J \ne 0$ and is therefore detected by the syndrome extraction. This is a strong property; the gauge code detects all single-qubit Pauli errors at a coordination-four vertex with spin-$1/2$ edges. However, as we will now show, this does not mean that every such error can be corrected.

\subsection{Failure of the Knill--Laflamme conditions}
\label{subsec:kl_failure}

We have established that every single-qubit Pauli error on edge $k$ is detected by
the gauge syndrome. We now ask the stronger question of whether the syndrome
contains enough information to identify the error and recover from it. The
answer is no, and the obstruction lies in the multiplicity space of the singlet
sector.

\subsubsection{Decomposing the error into a syndrome part and a multiplicity part}

A Pauli error on edge $k$, restricted to its action on the singlet sector,
admits a Wigner--Eckart factorisation
\begin{equation}\label{eq:we_factorization}
    P_k\bigl|_{V_0 \otimes \mathbb{C}^{\mu_0}}
    \;=\;
    \sum_{q} O^{(1)}_q \otimes A_k^{(q)},
\end{equation}
where $O^{(1)}_q$ is the spherical tensor component carrying the gauge quantum
numbers and $A_k^{(q)}$ is a linear map between multiplicity spaces. The
spherical operator $O^{(1)}_q$ determines which $J=1$ magnetic sector receives
the output, namely $M = q$ when acting on the singlet $M=0$ state. The
multiplicity-space factor $A_k^{(q)}$ is the reduced matrix element. It encodes
which edge the error acted on and is independent of the angular labels.

For the coordination-four vertex, the singlet multiplicity is $\mu_0 = 2$ and
the $J=1$ multiplicity is $\mu_1 = 3$, so each $A_k^{(q)}$ is a $3 \times 2$
matrix. We adopt the basis for the multiplicity spaces induced by the
$(12)(34)$ coupling scheme. The two singlets are
\begin{equation}\label{eq:singlets}
    |\alpha_1\rangle = |0\rangle_{12} \otimes |0\rangle_{34},
    \qquad
    |\alpha_2\rangle = \frac{1}{\sqrt{3}}\sum_{m=-1}^{1} (-1)^{1-m}\,
        |1, m\rangle_{12} \otimes |1, -m\rangle_{34},
\end{equation}
where the subscripts label the qubit pair and $|j, m\rangle_{ab}$ denotes the
spin-$j$ state with magnetic quantum number $m$ in the angular momentum coupling
of qubits $a$ and $b$. The first singlet pairs the two edge pairs into local
singlets. The second couples each pair into a local triplet and then combines
the two triplets into a total singlet via the standard $1 \otimes 1 \to 0$
Clebsch--Gordan coefficients.

\subsubsection{Computing $A_k^{(q)}$ for $Z$ errors}

We compute $A_k^{(0)}$ explicitly for each $k \in \{0, 1, 2, 3\}$ when the error
is $Z_k$. In this case, the $q=0$ component is the only relevant one because $Z = O^{(1)}_0$. The result of $Z_k$ acting on each singlet is read off by tracking the action of $Z$ on the constituent computational basis states.

For $Z_0$ acting on $|\alpha_1\rangle$, the qubit-$0$ slot of the
$|0\rangle_{12} = (|01\rangle - |10\rangle)/\sqrt{2}$ factor picks up a sign
according to its computational basis value, converting the singlet into the
triplet $|1, 0\rangle_{12} = (|01\rangle + |10\rangle)/\sqrt{2}$. The
$|0\rangle_{34}$ factor is untouched. We obtain
\begin{equation}\label{eq:z0_alpha1}
    Z_0 |\alpha_1\rangle = |1, 0\rangle_{12} \otimes |0\rangle_{34}.
\end{equation}
This state lies in the $(j_{12}, j_{34}) = (1, 0)$ multiplicity channel of the
$J=1, M=0$ sector. For $Z_1$ the same calculation gives the opposite sign
because the surviving terms exchange roles under interchange of the two qubits,
\begin{equation}\label{eq:z1_alpha1}
    Z_1 |\alpha_1\rangle = -|1, 0\rangle_{12} \otimes |0\rangle_{34}.
\end{equation}

The action on $|\alpha_2\rangle$ is more involved. The operator $Z_0$, for example, leaves $|1, 1\rangle_{12}$ invariant, introduces a sign to $|1, -1\rangle_{12}$, and converts $|1, 0\rangle_{12}$ into $|0\rangle_{12}$. Tracking each term in \eqref{eq:singlets}, we find
\begin{align}
    Z_0 |\alpha_2\rangle &= \frac{1}{\sqrt{3}}\Bigl[
        |1, 1\rangle_{12} |1, -1\rangle_{34}
        - |0\rangle_{12} |1, 0\rangle_{34}
        - |1, -1\rangle_{12} |1, 1\rangle_{34}
    \Bigr], \label{eq:z0_alpha2}\\
    Z_1 |\alpha_2\rangle &= \frac{1}{\sqrt{3}}\Bigl[
        |1, 1\rangle_{12} |1, -1\rangle_{34}
        + |0\rangle_{12} |1, 0\rangle_{34}
        - |1, -1\rangle_{12} |1, 1\rangle_{34}
    \Bigr]. \label{eq:z1_alpha2}
\end{align}
The two states agree on the $|1, \pm 1\rangle_{12}$ pieces and differ in sign on
the $|0\rangle_{12}|1, 0\rangle_{34}$ piece. The reason is that $Z_0$ and $Z_1$
agree on the symmetric triplet states $|1, \pm 1\rangle_{12}$ but disagree on
the antisymmetric singlet $|0\rangle_{12}$, since interchange of the two qubits
acts trivially on triplets and as $-1$ on the singlet.

\subsubsection{The Knill--Laflamme product}

The Knill--Laflamme conditions, restricted to a fixed syndrome sector
$(J, M, N)$, require that for every pair of correctable errors $E_a, E_b$ with
multiplicity-space factors $A_a, A_b$ in that sector,
\begin{equation}\label{eq:kl_condition}
    A_a^\dagger A_b \;\propto\; \mathbf{1}_{\mu_0}.
\end{equation}
Proportionality to the identity on the singlet multiplicity space is what
allows a syndrome-conditional recovery to undo all errors in the sector with a
single unitary, since the recovery has access only to the syndrome and cannot
adapt to the multiplicity-space action.

We test this condition for $Z$ errors on edges $0$ and $1$, which produce the
same syndrome $(J, M, N) = (1, 0, 0)$. The matrix entries of $A_0^\dagger A_1$
are the inner products $\langle Z_0 \alpha_i | Z_1 \alpha_j \rangle$. The diagonal entry $(1,1)$ is computed from \eqref{eq:z0_alpha1} and
\eqref{eq:z1_alpha1},
\begin{equation}\label{eq:kl_11}
    \langle Z_0 \alpha_1 | Z_1 \alpha_1 \rangle = -\langle 1, 0 | 1, 0\rangle_{12}
    \langle 0 | 0 \rangle_{34} = -1.
\end{equation}
The diagonal entry $(2,2)$ is computed from \eqref{eq:z0_alpha2} and
\eqref{eq:z1_alpha2}. The two $|1, \pm 1\rangle$ pieces contribute $+1/3$ each,
the $|0\rangle_{12}|1, 0\rangle_{34}$ piece contributes $-1/3$ from the sign
mismatch, and the cross terms vanish by orthogonality, producing
\begin{equation}\label{eq:kl_22}
    \langle Z_0 \alpha_2 | Z_1 \alpha_2 \rangle = \tfrac{1}{3} + \tfrac{1}{3} - \tfrac{1}{3} = \tfrac{1}{3}.
\end{equation}
The off-diagonal entries vanish because $Z_0|\alpha_1\rangle$ has $j_{34} = 0$
while $Z_1|\alpha_2\rangle$ has $j_{34} = 1$ in every term, and states with
distinct $j_{34}$ are orthogonal. We obtain
\begin{equation}\label{eq:kl_counterexample}
    A_0^\dagger A_1 =
    \begin{pmatrix} -1 & 0 \\ 0 & \tfrac{1}{3} \end{pmatrix}.
\end{equation}
This matrix is diagonal but its eigenvalues are unequal, so it is not
proportional to $\mathbf{1}_2$.

\begin{proposition}[Failure of Knill--Laflamme]
\label{prop:kl_failure}
The Knill--Laflamme conditions for exact correction of single-qubit Pauli
errors by syndrome-based recovery fail at a coordination-four vertex with four
spin-$1/2$ edges.
\end{proposition}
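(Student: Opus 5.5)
The plan is to derive the proposition directly from the explicit counterexample~\eqref{eq:kl_counterexample}. Exact correctability of a set of errors $\{E_a\}$ on a code space with projector $\Pi_0$ requires $\Pi_0 E_a^\dagger E_b \Pi_0 = c_{ab}\Pi_0$ for all $a,b$. So it suffices to find a single pair of single-qubit Pauli errors for which this fails. I would take $E_a = Z_0$ and $E_b = Z_1$.

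First I will make the reduction to the multiplicity factors precise. By theorem~\ref{thm:detection} and the selection rule $\Delta M = q$ from~\eqref{eq:spherical_pauli}, both $Z_0$ and $Z_1$ map the singlet sector $V_0\otimes\mathbb{C}^{2}$ into $\mathcal{W}^1_0$. Via the factorisation~\eqref{eq:we_factorization} they act there as $\ket{0}\otimes A^{(0)}_k$. Hence, for singlet basis states $\ket{\alpha_i}$,
\begin{equation*}
\bra{\alpha_i}Z_0^\dagger Z_1\ket{\alpha_j} = \langle Z_0\alpha_i|Z_1\alpha_j\rangle = (A_0^\dagger A_1)_{ij}.
\end{equation*}
The Knill--Laflamme condition for this pair is therefore exactly $A_0^\dagger A_1 \propto \mathbf{1}_2$, as in~\eqref{eq:kl_condition}.

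Second, I will check the entries, using~\eqref{eq:z0_alpha1}--\eqref{eq:z1_alpha2} as computed above. This gives $A_0^\dagger A_1=\mathrm{diag}(-1,1/3)$, which is not proportional to the identity.

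A useful sanity check, and a cleaner route, is that $Z_0^\dagger Z_1 = Z_0Z_1$ is a rank-zero plus rank-one (in the edge-$0$/edge-$1$ pair) operator. Its restriction to the singlets is diagonal in the $(12)(34)$ basis. On $\ket{0}_{12}$ it acts as $-1$. On the triplet $j_{12}=1$ its average over $m$ is $\tfrac13(1-1+1)=\tfrac13$, since $Z_0Z_1$ is $+1$ on $\ket{1,\pm1}_{12}$ and $-1$ on $\ket{1,0}_{12}$. This reproduces the two unequal eigenvalues independently of the sign conventions in~\eqref{eq:z0_alpha2}.

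Finally, I will argue that no cleverer syndrome-based recovery rescues the situation. Both errors yield the identical syndrome $(1,0,0)$ with equal probability, so any recovery conditioned on the syndrome applies the same operation to both error branches. The Knill--Laflamme theorem says such recovery is possible only if the condition holds, and it fails here. Exact correction of the set $\{X_k,Y_k,Z_k\}$ is then impossible because it contains an uncorrectable subset.

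The main obstacle is conceptual rather than computational. The proof must justify that the condition need only be checked within a common syndrome sector, and that the nontrivial singlet multiplicity $\mu_0=2$ is what makes the condition nonvacuous. For $\mu_0=1$ any $1\times1$ matrix is proportional to the identity, and the statement would be false.
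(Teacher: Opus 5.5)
Your proposal is correct and follows the paper's proof: both use the pair $Z_0,Z_1$, which share a syndrome sector, and show that $A_0^\dagger A_1=\mathrm{diag}(-1,\tfrac{1}{3})$ is not proportional to $\mathbf{1}_2$. Your direct evaluation of $\Pi_0 Z_0Z_1\Pi_0$ is a good check that does not depend on sign conventions, and since the standard Knill--Laflamme condition is necessary for any recovery, it also rules out recoveries that are not syndrome-based. One small slip in that side remark: $Z_0Z_1$ is symmetric under exchanging edges $0$ and $1$, so it decomposes as a scalar plus a rank-two tensor with no rank-one part, though this does not affect your computation.
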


\begin{proof}
Equation \eqref{eq:kl_counterexample} exhibits two errors $Z_0$ and $Z_1$ that
produce the same gauge syndrome and yet have $A_0^\dagger A_1$ not proportional
to the identity on the singlet multiplicity space. The Knill--Laflamme
condition \eqref{eq:kl_condition} is violated.
\end{proof}

\subsubsection{Structure of the failure}

The counterexample \eqref{eq:kl_counterexample} is representative of a broader
pattern. Let us summarize the structural features of $A_k^\dagger A_l$ for all
edge pairs. The diagonal terms $A_k^\dagger A_k$ are proportional to $\mathbf{1}_2$ for
every edge $k$ and every Pauli type. This follows from the fact that
$A_k^{(q)}$ has equal singular values for every $k$ and every spherical
component $q$. In the spherical tensor basis, the singular values of each
$A_k^{(q)}$ are $(1, 1)$. Cartesian errors $X$ and $Y$ split their weight
between $q = +1$ and $q = -1$ with coefficient $1/\sqrt{2}$, so when projected
onto a definite $M = \pm 1$ sector, the corresponding multiplicity-space matrix
has singular values $(1/\sqrt{2}, 1/\sqrt{2})$. The diagonal Knill--Laflamme
products $A_k^\dagger A_k$ are therefore $(1/2)\mathbf{1}_2$ in the
$M = \pm 1$ sectors and $\mathbf{1}_2$ in the $M = 0$ sector.

The off-diagonal terms $A_k^\dagger A_l$ for $k \neq l$ depend on whether the
two edges are paired in the coupling scheme. Edges $0$ and $1$ are paired
through the first-stage coupling, as are edges $2$ and $3$. For paired edges
the matrices $A_k$ and $A_l$ have nonzero entries in the same positions but
differ by sign flips in entries that involve the pair-singlet channel
$j_{12} = 0$ or $j_{34} = 0$. The reason is that the qubit-interchange action
on the pair Hilbert space is $+1$ on the triplet and $-1$ on the singlet, and
$Z_k - Z_l$ is antisymmetric under interchange. The diagonal entries of
$A_k^\dagger A_l$ pick up unequal signs from this interchange, as in
\eqref{eq:kl_counterexample}.

For cross-pair edges, for example $k=0$ and $l=2$, the matrices $A_k$ and $A_l$
have nonzero entries in different positions. The reason is that $Z_0$ can
change $j_{12}$ but not $j_{34}$, while $Z_2$ can change $j_{34}$ but not
$j_{12}$. The product $A_0^\dagger A_2$ then has off-diagonal entries in the
multiplicity basis, not just unequal diagonal entries, and this is even further
from proportionality to the identity than the paired case.

In every case the off-diagonal Knill--Laflamme products fail proportionality
to the identity. Different edges produce the same gauge syndrome but act
distinguishably on the multiplicity space. The syndrome measurement records
the angular quantum numbers of the error and discards the edge label, and the
discarded information is exactly what is needed to choose the correct recovery.
No syndrome-based recovery can perfectly undo all single-qubit Pauli errors at
this vertex.

\subsection{Residual errors after gauge cooling}
\label{sec:residual} After gauge cooling, the state is returned to the singlet sector but the recovery cannot distinguish which edge caused the error. The result is a residual logical error on the two-dimensional multiplicity space $\mathbb{C}^{\mu_0} = \mathbb{C}^2$. To characterize these residual errors we fix a reference recovery, namely the pseudoinverse of the error map from edge~$0$, and study the effective $2 \times 2$ operator $R \cdot A_k = A_0^+ A_k$ for each edge~$k$.

\begin{theorem}[Universality of residual errors]
\label{thm:universality}
At a coordination-four vertex with four spin-$1/2$ edges, the residual error $A_0^+ A_k$ on the multiplicity qubit is independent of the Pauli type ($X$, $Y$, or $Z$) and independent of the syndrome sector~$M$.
\end{theorem}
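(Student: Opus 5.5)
The plan is to derive the statement from the Wigner--Eckart factorisation \eqref{eq:we_factorization}, made precise. Each single-edge Pauli error is a component of a rank-one spherical tensor operator. Write $\vec\sigma_k$ for the vector operator on edge $k$, with spherical components $(\sigma_k)^{(1)}_q$ as in \eqref{eq:spherical_pauli}. Restricted to the map from the singlet sector $V_0\otimes\mathbb{C}^{\mu_0}$ into the $J=1$ sector $V_1\otimes\mathbb{C}^{\mu_1}$, the Wigner--Eckart theorem from the proof of theorem~\ref{thm:detection} gives
\begin{equation*}
    (\sigma_k)^{(1)}_q\,\bigl(\ket{0}\otimes\ket{\alpha}\bigr)
    = \langle 0,0;1,q\,|\,1,q\rangle\,\ket{q}\otimes B_k\ket{\alpha},
\end{equation*}
where $B_k=\langle 1\|\sigma_k\|0\rangle$ is a single $3\times 2$ reduced matrix. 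It depends on the edge $k$ and on the chosen multiplicity bases, but not on $q$. Since $\langle 0,0;1,q|1,q\rangle=1$ for every $q$, the multiplicity factor in the sector $M=q$ is $A_k^{(q)}=B_k$. This is the key step, and it is the one on which I expect the whole proof to rest.

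The first claim, independence of $M$, then follows directly. Take a Cartesian error $P\in\{X,Y,Z\}$ and expand $P=\sum_q c_q^{P}\,O^{(1)}_q$ using \eqref{eq:spherical_pauli}. The coefficients are $c^{Z}=(0,1,0)$, $c^{X}=(1/\sqrt2,0,-1/\sqrt2)$ and $c^{Y}=(i/\sqrt2,0,i/\sqrt2)$, in the order $q=-1,0,+1$ (signs depending on convention). These coefficients are the same for every edge $k$. Projecting onto the syndrome sector $M$ therefore gives the multiplicity map $A_k^{P,M}=c^P_M\,B_k$, with a scalar $c^P_M$ that is independent of $k$. Whenever $c^P_M\neq0$, that is, whenever the sector is reachable, the pseudoinverse satisfies $(c B_0)^+=c^{-1}B_0^+$. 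Hence
\begin{equation*}
    (A_0^{P,M})^+A_k^{P,M}=B_0^+B_k,
\end{equation*}
and the scalar cancels. The right-hand side depends neither on $P$ nor on $M$, which proves the theorem.

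What remains is to check that the pseudoinverse is well defined and behaves as claimed. For this I will use the fact, stated in the structure discussion, that $B_0$ has singular values $(1,1)$. Then $B_0$ has full column rank, $B_0^+=B_0^\dagger$, and $B_0^+B_0=\mathbf{1}_2$, so edge $0$ itself is corrected exactly. As a sanity check I would compute $B_0^\dagger B_1=\mathrm{diag}(-1,1/3)$ and confirm that it matches \eqref{eq:kl_counterexample}. I would also confirm that the $M=\pm1$ matrices for $X$ reproduce the same $2\times2$ residual.

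The main obstacle is phase conventions, not the representation theory itself. The reduced matrix can only be $q$-independent if the multiplicity bases for $J=1$ are chosen consistently across $M$. This means building the states $\ket{1,M}\otimes\ket{\beta}$ for each channel $(j_{12},j_{34})$ with the same Condon--Shortley coupling, so that the lowering operator $J_-$ maps channel $\beta$ at $M$ to channel $\beta$ at $M-1$. I would fix this explicitly in the $(12)(34)$ scheme. I would then verify at least one off-diagonal case, for instance $Z_0$ versus $(X_0)_{M=+1}$, by direct computation on $\ket{\alpha_1},\ket{\alpha_2}$. That check guards against a stray sign, which would spoil the claimed exact equality.
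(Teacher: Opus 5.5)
Your proposal is correct and follows the same route as the paper. The Wigner--Eckart theorem makes the multiplicity map a $q$-independent reduced matrix element $B_k$ shared by all three Pauli operators on edge $k$, so $A_0^+A_k$ depends only on $k$. Your version is more careful than the paper's short proof: you show explicitly that the Cartesian coefficients $c^P_M$ and the unit Clebsch--Gordan factor cancel in the pseudoinverse, and you note that the $J=1$ multiplicity bases must be chosen consistently across $M$, both of which the paper leaves implicit.
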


\begin{proof}
By the Wigner--Eckart theorem, the multiplicity-space matrix $A_k^{(q)}$ is the reduced matrix element of a rank-$1$ spherical tensor acting on edge~$k$, coupling $J_a = 0$ to $J_b = 1$. The reduced matrix element depends only on the edge label~$k$ and the representation labels $(J_a, J_b, J) = (0, 1, 1)$, not on the spherical component~$q$ or the magnetic quantum numbers. Since all three Pauli matrices are rank-$1$ tensors with the same representation labels, they share the same reduced matrix elements. The products $A_0^+ A_k$ therefore depend only on~$k$.
\end{proof}

By theorem~\ref{thm:universality} it suffices to evaluate the residual errors in a single syndrome sector. We compute $A_0^+ A_k$ using $Z$ errors in the $M = 0$ sector, decompose the resulting $2 \times 2$ matrices into Pauli components, and record the normalized weights $|c_P|^2 / \sum_{P'} |c_{P'}|^2$ in table~\ref{tab:residual}.

\begin{table}[h]
\centering
\caption{Residual error weights after gauge cooling at a coordination-four vertex with four spin-$1/2$ edges. Edge~$0$ is taken as the reference. By theorem~\ref{thm:universality}, the weights are independent of the Pauli type and syndrome sector.}
\label{tab:residual}
\begin{tabular}{@{}lccc@{}}
\hline
Error & $I$ weight & $X$ weight & $Z$ weight \\
\hline
$Z_0$ & $1.00$ & $0$ & $0$ \\
$Z_1$ & $0.20$ & $0$ & $0.80$ \\
$Z_2$ & $0.20$ & $0.60$ & $0.20$ \\
$Z_3$ & $0.20$ & $0.60$ & $0.20$ \\
\hline
\end{tabular}
\end{table}

The absence of $Y$ errors and the bounded weights of $X$ and $Z$ errors establish that the residual error channel on the multiplicity qubit has support only on $\{I, X, Z\}$, which is precisely the condition for compatibility with CSS-type concatenation. An $[[n, 1, d]]$ CSS code encoding the multiplicity qubit across $n$ vertices could correct these residuals independently in the $X$ and $Z$ bases, implementing a two-layer error correction scheme in which gauge cooling eliminates gauge-variant errors and a stabilizer code corrects the residual multiplicity-space distortions. We leave the construction of this concatenated scheme to future work.

\section{Single-plaquette demonstration}
\label{sec:numerics}

To validate the protocol, we simulate Trotterized time evolution of the Kogut--Susskind Hamiltonian~\cite{kogut1975hamiltonian} on a single plaquette with $j_{\mathrm{max}} = 1/2$ and gauge coupling $g^2 = 1$. After each Trotter step, independent noise is applied to every edge, followed by iterative gauge cooling at all four vertices. We compare the fidelity of the corrected state with that of the uncorrected state, each measured against the ideal noiseless evolution, under both depolarizing and amplitude damping noise models.

\subsection{Geometry and Hamiltonian}
\label{sec:geometry}

The single-plaquette geometry consists of four vertices $\{v_0, v_1, v_2, v_3\}$ and four oriented edges $\{e_0, e_1, e_2, e_3\}$ arranged so that the plaquette operator traces clockwise around the square; $e_0 : v_0 \to v_1$, $e_1 : v_1 \to v_2$, $e_2 : v_2 \to v_3$, $e_3 : v_3 \to v_0$. Each vertex has coordination two, with one outgoing and one incoming edge. With the truncation $j_{\mathrm{max}} = 1/2$, each edge carries a five-dimensional Hilbert space spanned by the Wigner basis states $\ket{j, m, n}$ with $(j, m, n) \in \{(0, 0, 0)\} \cup\{(1/2, m, n) : m, n = \pm 1/2\}$. The total Hilbert space dimension is $5^4 = 625$. We treat each edge as a $d$-dimensional qudit with $d = 5$ for $j_{\mathrm{max}} = 1/2$; in practice each edge can be encoded into $\lceil \log_2 d \rceil = 3$ qubits by embedding the five-dimensional Hilbert space into a subspace of the eight-dimensional register.

The Kogut--Susskind Hamiltonian for pure SU(2) gauge theory on this plaquette is $H = H_E + H_B$, where $H_E$ is the electric (kinetic) term and $H_B$ is the magnetic (plaquette) term. The electric Hamiltonian is
\begin{equation}\label{eq:H_E}
    H_E = \frac{g^2}{2} \sum_{e \in E}
    \hat{j}_e\bigl(\hat{j}_e + 1\bigr),
\end{equation}
where $\hat{j}_e$ is the Casimir operator on edge $e$, diagonal in the Wigner basis with eigenvalue $j_e(j_e + 1)$. For $j_{\mathrm{max}} = 1/2$ the single-edge electric Hamiltonian is $H_E^{(e)} = (g^2 / 2) \mathrm{diag}(0, 3/4, 3/4, 3/4, 3/4)$, where the entries correspond to $j = 0$ with eigenvalue $0$ and the four $j = 1/2$ states each with eigenvalue $3/4$. The full electric Hamiltonian is the sum $H_E = \sum_{e=0}^{3} H_E^{(e)} \otimes \id_{\bar e}$. Since $H_E$ is diagonal in the Wigner basis, the unitary $\mathrm{e}^{-iH_E\, dt}$ is trivially computed as a diagonal matrix.

The magnetic Hamiltonian involves the trace of the product of link operators around the plaquette,
\begin{equation}\label{eq:H_B}
    H_B = -\frac{1}{g^2}\, \real\!\left[
    \tr_{1/2}\bigl(U_{\mathrm{plaq}}\bigr)\right],
    \qquad
    U_{\mathrm{plaq}} = U_{e_0}\, U_{e_1}\, U_{e_2}\, U_{e_3},
\end{equation}
where $\tr_{1/2}$ denotes the trace in the fundamental ($j = 1/2$) representation. To compute the matrix elements of $H_B$ in the Wigner basis we use the identity for the integral of a product of three Wigner $D$-functions over $\mathrm{SU}(2)$,
\begin{equation}\label{eq:three_D}
    \int_{\mathrm{SU}(2)}
    \overline{D^{(j')}_{m' n'}(g)}\,
    D^{(1/2)}_{a b}(g)\,
    D^{(j)}_{m n}(g)\, dg
    = \frac{C^{j' m'}_{\frac{1}{2} a,\, j m}\;
    C^{j' n'}_{\frac{1}{2} b,\, j n}}{2j' + 1},
\end{equation}
where $C^{J M}_{j_1 m_1,\, j_2 m_2}= \langle j_1, m_1; j_2, m_2 | J, M \rangle$ denotes the Clebsch--Gordan coefficient. This integral arises because each edge contributes one $D$-function from the bra, one from the fundamental-representation link operator, and one from the ket. For each edge, we define the tensor
\begin{equation}\label{eq:edge_tensor}
    T^{(e)}_{a b,\, I' I}
    = \sqrt{d_{j'} d_{j}}\,
    \frac{C^{j' m'}_{\frac{1}{2} a,\, j m}\,
    C^{j' n'}_{\frac{1}{2} b,\, j n}}{d_{j'}},
\end{equation}
where $I = (j, m, n)$ and $I' = (j', m', n')$ are compound indices labeling the single-edge Wigner basis, $a$ and $b$ are fundamental-representation indices contracted around the plaquette, and the factors of $\sqrt{d_j}$ arise from the normalization of the Wigner basis. The selection rules enforce $m' = a + m$, $n' = b + n$, and $|j - 1/2| \le j' \le j + 1/2$. The matrix element of the plaquette trace in the full Wigner basis is then
\begin{equation}\label{eq:plaq_matrix}
    \bra{I_0' I_1' I_2' I_3'}
    \tr_{1/2}\bigl(U_{\mathrm{plaq}}\bigr)
    \ket{I_0 I_1 I_2 I_3}
    = \sum_{a, b_0, b_1, b_2}
    T^{(e_0)}_{a b_0,\, I_0' I_0}\,
    T^{(e_1)}_{b_0 b_1,\, I_1' I_1}\,
    T^{(e_2)}_{b_1 b_2,\, I_2' I_2}\,
    T^{(e_3)}_{b_2 a,\, I_3' I_3},
\end{equation}
where the sum runs over all fundamental-representation indices. The cyclic contraction of the indices $a, b_0, b_1, b_2$ reflects the trace structure of the plaquette. The full magnetic Hamiltonian is obtained by taking the real part and multiplying by $-1/g^2$. We have verified numerically that the resulting $625 \times 625$ matrix $H_B$ is Hermitian, that $[H_B, C^{(v)}] = 0$ for the gauge Casimir $C^{(v)}$ at each vertex, confirming gauge invariance, and that the matrix agrees with an independent Monte Carlo evaluation of the Haar integral to within statistical precision.

\subsection{Trotterization and noise models}
\label{sec:trotter}

The time evolution operator $\mathrm{e}^{-iHt}$ is approximated using a first-order Trotter decomposition. The total evolution time $T$ is divided into $n_{\mathrm{steps}}$ intervals of size $dt = T / n_{\mathrm{steps}}$, and the evolution over each interval is approximated as
\begin{equation}
    \mathrm{e}^{-iH\, dt} \approx
    \mathrm{e}^{-iH_E\, dt}\, \mathrm{e}^{-iH_B\, dt}.
\end{equation}
The Trotter error per step is $O(dt^2 \|[H_E, H_B]\|)$, controlled by choosing $dt$ sufficiently small. We use $T = 3.0$ and $n_{\mathrm{steps}} = 30$, giving $dt = 0.1$. The ideal noiseless evolution is obtained by iterating this Trotter step starting from the strong-coupling vacuum $\ket{\Psi_0} = \ket{0, 0, 0}^{\otimes 4}$. The resulting sequence of states $\{\ket{\Psi_k}\}_{k=0}^{n_{\mathrm{steps}}}$ serves as the reference against which the noisy evolution is compared. We have verified that the Trotterized evolution preserves gauge invariance exactly; $C^{(v)} \ket{\Psi_k} = 0$ at every vertex and every step, as required by the fact that both $H_E$ and $H_B$ commute with the gauge Casimir.

After each Trotter step, noise is applied independently to each edge. The qudit depolarizing channel on a single $d$-dimensional edge is
\begin{equation}\label{eq:depol}
    \mathcal{E}_{\mathrm{depol}}(\rho_e)
    = (1 - p)\, \rho_e + \frac{p}{d}\, \id_d\, \tr(\rho_e),
\end{equation}
where $p \in [0, 1]$ is the error rate per edge per Trotter step. With probability $1 - p$ the state is unchanged and with probability $p$ it is replaced by the maximally mixed state on the edge Hilbert space. This channel is trace-preserving and completely positive for all $p \in [0, 1]$. In the full $625$-dimensional Hilbert space the channel on edge $e$ acts as
\begin{equation}
    \mathcal{E}_{\mathrm{depol}}^{(e)}(\rho)
    = (1 - p)\, \rho + \frac{p}{d}\,
    \tr_e(\rho) \otimes \id_e,
\end{equation}
and the channels on different edges commute. The depolarizing channel mixes all basis states on the affected edge, including states with different $j$ values, and generically creates gauge violations at both endpoints of the edge. We use error rates $p \in \{0.001, 0.005, 0.01\}$, which span the range of effective per-edge error rates achievable on current superconducting quantum hardware; on the latest IBM Heron processors, effective gate error rates are of order $10^{-3}$ to $10^{-2}$~\cite{mckay2023benchmarking}.

The generalized amplitude damping channel models the decay of each edge toward its ground state $\ket{0, 0, 0}$, the $j = 0$ singlet. This is physically motivated by energy relaxation $T_1$ processes in superconducting hardware, generalized to the qudit setting. The channel is defined by Kraus operators on a single $d$-dimensional edge,
\begin{eqnarray}\label{eq:ad_kraus}
    K_0 &=& \ket{0}\!\bra{0}
    + \sqrt{1 - \gamma}\, \sum_{i=1}^{d-1}
    \ket{i}\!\bra{i},
    \\
    K_i &=& \sqrt{\gamma}\, \ket{0}\!\bra{i},
    \qquad i = 1, \ldots, d-1, \nonumber
\end{eqnarray}
where $\ket{0} = \ket{0, 0, 0}$ is the $j = 0$ state, $\ket{i}$ for $i = 1, \ldots, 4$ are the four $j = 1/2$ states, and $\gamma \in [0, 1]$ is the damping rate per edge per Trotter step. The operator $K_0$ preserves the $j = 0$ population and attenuates the $j = 1/2$ amplitudes by $\sqrt{1 - \gamma}$, while $K_i$ transfers population from the $i$-th excited state to $\ket{0, 0, 0}$. The trace-preservation condition $K_0^\dagger K_0 + \sum_{i=1}^{d-1} K_i^\dagger K_i = \id_d$ is straightforward to verify.

\subsection{Convergence of the iterative sweep}
\label{sec:convergence_results}

To study the convergence of iterative gauge cooling, we prepare the strong-coupling vacuum, apply one Trotter step, apply depolarizing noise with $p = 0.005$ on all four edges, and then perform repeated sweeps of gauge cooling. Each sweep consists of syndrome extraction and recovery at vertices $v_0, v_1, v_2, v_3$ in sequence. After each sweep we compute $\mathcal{F}_{\mathrm{GI}}$. The results are shown in table~\ref{tab:convergence}. After a single sweep the gauge-invariant overlap increases modestly, reflecting the fact that each vertex correction partially disrupts its neighbors. Subsequent sweeps reduce the gauge-violating component geometrically with a contraction factor of approximately $0.45$ per sweep. After five sweeps the deficit $1 - \mathcal{F}_{\mathrm{GI}}$ has decreased by roughly an order of magnitude from its initial value.

\begin{table}[h]
\centering
\caption{Gauge-invariant overlap after each sweep of iterative gauge cooling, for a single Trotter step with depolarizing noise $p = 0.005$ on the single-plaquette geometry. The deficit decreases geometrically with a contraction factor of approximately $0.45$ per sweep.}
\label{tab:convergence}
\begin{tabular}{@{}ccc@{}}
\hline
Sweep & $\mathcal{F}_{\mathrm{GI}}$ &
$1 - \mathcal{F}_{\mathrm{GI}}$ \\
\hline
0 (before cooling) & 0.992    & $8.0 \times 10^{-3}$ \\
1                  & 0.993    & $7.0 \times 10^{-3}$ \\
2                  & 0.9957   & $4.3 \times 10^{-3}$ \\
3                  & 0.9977   & $2.3 \times 10^{-3}$ \\
4                  & 0.9988   & $1.2 \times 10^{-3}$ \\
5                  & 0.9994   & $5.5 \times 10^{-4}$ \\
6                  & 0.9997   & $2.5 \times 10^{-4}$ \\
7                  & 0.9999   & $1.2 \times 10^{-4}$ \\
8                  & 0.99995  & $5.2 \times 10^{-5}$ \\
9                  & 0.99998  & $2.3 \times 10^{-5}$ \\
10                 & 0.99999  & $1.0 \times 10^{-5}$ \\
\hline
\end{tabular}
\end{table}

\subsection{Fidelity results}
\label{sec:fidelity_results}

\begin{figure}[t]
\centering
\includegraphics[width=0.95\linewidth]{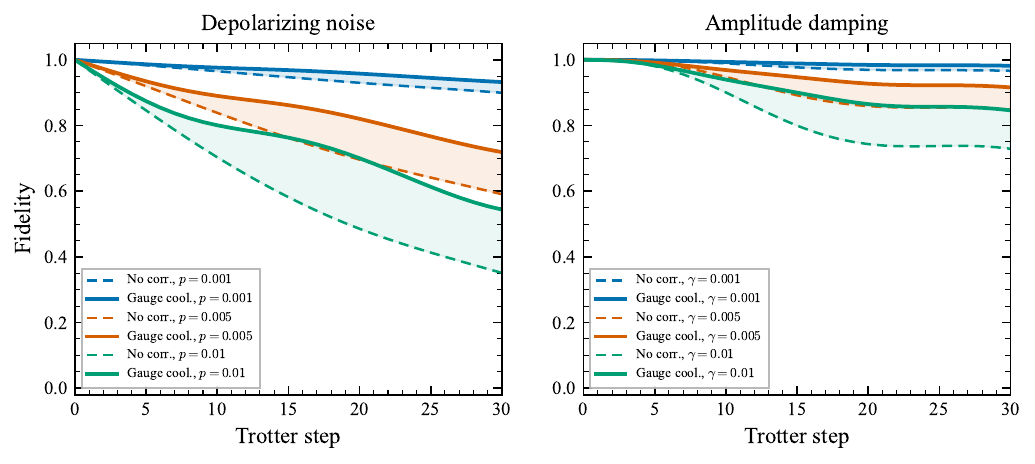}
\caption{Fidelity with the ideal noiseless evolution as a function of Trotter step for a single-plaquette SU(2) lattice gauge theory simulation with $j_{\mathrm{max}} = 1/2$, coupling $g^2 = 1$, total evolution time $T = 3.0$, and step size $dt = 0.1$. Dashed lines, no error correction; solid lines, iterative gauge cooling up to ten sweeps applied after each Trotter step. Left panel, qudit depolarizing noise with error rate $p$ per edge per step; right panel, amplitude damping with damping rate $\gamma$ per edge per step.}
\label{fig:results}
\end{figure}

The fidelity results for both noise channels are shown in figure~\ref{fig:results}. Gauge cooling slows the decay of fidelity across all noise rates tested, with substantial improvement at later time steps. These results establish that active syndrome extraction and recovery can suppress gauge violations and restore approximate gauge invariance in quantum simulations of non-abelian lattice gauge theories on near-term hardware. The extension to larger lattices, where the iterative convergence properties and the interplay with multiplicity-space errors become more significant, is a natural direction for future work.

\section{Conclusion}
\label{sec:conclusion}

We have presented a syndrome extraction and recovery protocol for SU(2) lattice gauge theory on noisy quantum hardware. The protocol uses mid-circuit measurement and a group quantum Fourier transform to extract a triplet syndrome $(J, M, N)$ that resolves the angular momentum and magnetic quantum numbers of any gauge violation at a vertex, and a syndrome-conditional unitary recovery returns the state to the singlet sector. Iterating the procedure as a sweep over vertices defines gauge cooling, which we have shown converges geometrically on the single-plaquette geometry. At coordination-four vertices, where the singlet multiplicity is non-trivial, we have proved that every single-qubit Pauli error is detected by the gauge syndrome and that the Knill--Laflamme conditions for exact recovery fail. The residual physical-subspace errors carry a structured Pauli decomposition whose vanishing $Y$ component suggests compatibility with concatenation by a CSS stabilizer code. Numerical demonstration on a single plaquette shows that gauge cooling restores approximate gauge invariance and improves fidelity under both depolarizing and amplitude damping noise at rates representative of current superconducting hardware.

\bibliographystyle{iopart-num}
\bibliography{refs}

\appendix
\section{Peter--Weyl decomposition and the unitary $t$-design condition}
\label{app:tdesign}

In this appendix we provide the details of the passage from the Peter--Weyl decomposition on individual edges to the Clebsch--Gordan decomposition at each vertex, and we derive the unitary $t$-design condition under which the discrete syndrome extraction is equivalent to the continuum operation.

\subsection*{Vertex regrouping}

By the Peter--Weyl theorem, each link Hilbert space decomposes as in~\eqref{eq:pw_edge}. The full Hilbert space of the lattice, restricting to gauge degrees of freedom, is the tensor product over all edges. Distributing the tensor product over the direct sums, this becomes
\begin{equation}\label{eq:edge_decomp_app}
    \bigoplus_{\{j_e\}_{e \in E}} \bigotimes_{e \in E}
    \bigl(V_{j_e} \otimes V_{j_e}^*\bigr),
\end{equation}
where the direct sum runs over all assignments of a spin label $j_e \in \frac{1}{2}\mathbb{N}_0$ to each edge of the lattice. We regroup the factors in each summand by vertex rather than by edge. An oriented edge $e$ with source vertex $s(e)$ and target vertex $t(e)$ contributes a factor of $V_{j_e}$ that transforms under gauge transformations at $s(e)$ and a factor of $V_{j_e}^*$ that transforms at $t(e)$. This follows from the definition of the gauge action; a gauge transformation $h_v$ at vertex $v$ acts via the left regular representation on outgoing edges, affecting the $m$ index of $V_{j_e}$, and via the right regular representation on incoming edges, affecting the $n$ index of $V_{j_e}^*$.

Collecting all factors associated with a given vertex $v$, we define the vertex Hilbert space
\begin{equation}\label{eq:vertex_hilbert_app}
    \mathcal{H}_v\bigl(\{j_e\}\bigr)
    = \bigotimes_{\substack{e \in E \\ s(e) = v}} V_{j_e}
    \;\otimes\; \bigotimes_{\substack{e \in E \\ t(e) = v}}
    V_{j_e}^*.
\end{equation}
For a fixed assignment of edge spins, the full Hilbert space then factorizes over vertices,
\begin{equation}
    \bigotimes_{e \in E} \bigl(V_{j_e} \otimes V_{j_e}^*\bigr)
    \cong \bigotimes_{v \in V}
    \mathcal{H}_v\bigl(\{j_e\}\bigr).
\end{equation}
This factorization is possible because each edge contributes exactly one tensor factor $V_{j_e}$ to its source vertex and one factor $V_{j_e}^*$ to its target vertex, with every factor thereby assigned to exactly one vertex.

At each vertex $v$ the space $\mathcal{H}_v(\{j_e\})$ is a tensor product of SU(2) irreps, both fundamental and contragradient. Since $V_j^* \cong V_j$ for SU(2), the vertex Hilbert space is equivalent to a tensor product of standard SU(2) representations. The Clebsch--Gordan decomposition then gives~\eqref{eq:vertex_decomp} as in the main text. The gauge transformation $U^{(v)}(h)$ acts as $\pi_{J_v}(h) \otimes \id_{\mu_{J_v}}$ within each summand, so the Gauss law constraint $U^{(v)}(h) = \id$ for all $h$ selects the $J_v = 0$ singlet sector. The physical subspace at vertex $v$ is therefore $V_0 \otimes \mathbb{C}^{\mu_0} \cong \mathbb{C}^{\mu_0}$, with the physical degrees of freedom residing entirely in the multiplicity space.

\subsection*{The $t$-design condition}

A unitary $t$-design on $\mathrm{SU}(2)$ is a finite set $\{g_1, \ldots, g_{n_t}\} \subset \mathrm{SU}(2)$ such that the uniform average over the set reproduces the Haar integral for all polynomials in the matrix entries of $g$ and $\bar g$ of degree at most $t$ in each~\cite{dankert2009,mele2024introduction}. Formally,
\begin{equation}\label{eq:tdesign_def}
    \frac{1}{n_t} \sum_{i=1}^{n_t} f(g_i)
    = \int_{\mathrm{SU}(2)} f(g)\, dg
\end{equation}
for every polynomial $f$ of bidegree at most $(t, t)$ in $g$ and $\bar g$.

The key operator in the syndrome extraction is the discrete $T^{(J)}_{MN}$ defined in~\eqref{eq:T_operator}, which approximates the Haar integral~\eqref{eq:T_haar}. To evaluate when the discrete sum equals the integral exactly, we must determine the polynomial bidegree of the integrand in the matrix entries of $g$ and $\bar g$ separately, since the $t$-design condition imposes independent degree constraints on each.

The factor $\overline{D^{(J)}_{MN}(g)}$ is the complex conjugate of a spin-$J$ representation matrix element. Since $D^{(J)}_{MN}(g) = [\pi_J(g)]_{MN}$ is a polynomial of degree $2J$ in the matrix entries of $g$ and degree $0$ in the entries of $\bar g$, its complex conjugate is a polynomial of degree $0$ in $g$ and degree $2J$ in $\bar g$.

The gauge action $U^{(v)}(g)$ acts on the edges incident to vertex $v$, and the polynomial type of each edge contribution depends on the orientation. On an outgoing edge, where $v$ is the source, the gauge transformation acts via the left regular representation $L_g$, whose matrix elements on the spin-$j_e$ sector are $[\overline{\pi_{j_e}(g)}]_{m' m}$. This follows from $L_g f(h) = f(g^{-1} h)$ together with the unitarity of $\pi_{j_e}$, which gives $[\pi_{j_e}(g^{-1})]_{m m'} = [\pi_{j_e}(g)^\dagger]_{m m'} = [\overline{\pi_{j_e}(g)}]_{m' m}$. Each such factor is a polynomial of degree $0$ in $g$ and degree $2 j_e$ in $\bar g$. On an incoming edge, where $v$ is the target, the gauge transformation acts via the right regular representation $R_g$, whose matrix elements are $[\pi_{j_e}(g)]_{n' n}$, a polynomial of degree $2 j_e$ in $g$ and degree $0$ in $\bar g$. The matrix element of the full integrand between any two basis states is therefore a polynomial with bidegree
\begin{equation}\label{eq:bidegree}
    \deg_g = 2 \sum_{e \in \mathrm{in}(v)} j_e,
    \qquad
    \deg_{\bar g} = 2 J + 2 \sum_{e \in \mathrm{out}(v)} j_e.
\end{equation}

\begin{lemma}[$t$-design condition]
\label{lem:tdesign}
The discrete $T^{(J)}_{MN}$ in~\eqref{eq:T_operator} equals the continuum operator in~\eqref{eq:T_haar} exactly provided
\begin{equation}\label{eq:t_requirement}
    t \;\ge\; \max\!\left(
    2 \sum_{e \in \mathrm{in}(v)} j_e,\;
    2 J + 2 \sum_{e \in \mathrm{out}(v)} j_e
    \right),
\end{equation}
where $\mathrm{in}(v)$ and $\mathrm{out}(v)$ denote the sets of incoming and outgoing edges at $v$.
\end{lemma}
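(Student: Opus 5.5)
The plan is to reduce the operator identity to a family of scalar identities, one for each matrix element. Then I will check that each scalar integrand is a polynomial of bidegree within the range where the $t$-design definition~\eqref{eq:tdesign_def} applies.

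\emph{Step 1: reduce to matrix elements.} Fix the edge-spin assignment $\{j_e\}$, so that $U^{(v)}(g)$ acts on the finite-dimensional block $\mathcal{H}_v(\{j_e\})$. Fix basis vectors $\ket{a},\ket{b}$ of the Wigner product basis on the edges incident to $v$; the spectator edges are acted on trivially. The operators~\eqref{eq:T_operator} and~\eqref{eq:T_haar} agree if and only if, for every such pair,
\begin{equation*}
\frac{1}{n_t}\sum_{i=1}^{n_t} f_{ab}(g_i) = \int_{\mathrm{SU}(2)} f_{ab}(g)\,dg,
\qquad
f_{ab}(g) := \overline{D^{(J)}_{MN}(g)}\,\bra{a}U^{(v)}(g)\ket{b}.
\end{equation*}
The common prefactor $\sqrt{2J+1}$ cancels from both sides. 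Since $U^{(v)}(g)$ is block diagonal in $\{j_e\}$, it suffices to treat each block, and the truncation $j_e\le j_{\max}$ makes the set of blocks finite.

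\emph{Step 2: compute the bidegree.} The matrix element $\bra{a}U^{(v)}(g)\ket{b}$ factorizes over the incident edges. By the computation preceding~\eqref{eq:bidegree}, an outgoing edge contributes an entry $[\overline{\pi_{j_e}(g)}]_{m'm}$ and an incoming edge contributes an entry $[\pi_{j_e}(g)]_{n'n}$. I would then justify that $[\pi_j(g)]_{mn}$ is a homogeneous polynomial of degree $2j$ in the entries of $g$ and contains no $\bar g$. This follows by realizing $V_j$ as $\mathrm{Sym}^{2j}(\mathbb{C}^2)$, where $\pi_j(g)$ acts on degree-$2j$ homogeneous polynomials in two variables, so every matrix entry is a degree-$2j$ polynomial in $g_{11},g_{12},g_{21},g_{22}$.

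One point needs care. An edge with both endpoints at $v$ (a self-loop) would contribute to both sums, and the bidegree is still additive in that case. Multiplying the factors gives exactly the bidegree~\eqref{eq:bidegree}. Degree here means an upper bound, which is all that is needed.

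\emph{Step 3: apply the design property.} If $t$ satisfies~\eqref{eq:t_requirement}, every $f_{ab}$ has bidegree at most $(t,t)$, so~\eqref{eq:tdesign_def} yields the scalar identity of Step 1. Reassembling the matrix elements gives equality of the operators. A consequence is that the conclusion~\eqref{eq:T_form} and the outcome statistics~\eqref{eq:outcome_prob} hold exactly for the discrete ancilla.

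\emph{Main obstacle.} The main subtlety is the definition of a $t$-design. Polynomials of bidegree at most $(t,t)$ must include non-balanced ones, meaning degree $p$ in $g$ and $q$ in $\bar g$ with $p\ne q$. The integrand here is generally unbalanced, so a design definition restricted to balanced $(t,t)$ monomials would not literally cover it.

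I would address this in two ways. First, I would read~\eqref{eq:tdesign_def} as stated, covering all bidegrees with $p,q\le t$. Second, I would remark that unbalanced terms have vanishing Haar integral: invariance under the center $g\mapsto -g$ kills terms with $p+q$ odd, and invariance under the torus kills the remaining mismatches, so the requirement on these terms is that the discrete averages also vanish. It would also be worth checking whether the homogeneous-degree framing is cleaner using $\bar g = \sigma_y g \sigma_y$ for SU(2). I would not rely on this, however, since it would change the degree bookkeeping that the stated bound is built on.
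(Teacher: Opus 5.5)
Your proof is correct and follows the paper's argument: the paper's proof combines the bidegree count~\eqref{eq:bidegree} with the definition~\eqref{eq:tdesign_def}, read as covering all bidegrees up to $(t,t)$, which is your Steps 2--3, while your matrix-element reduction and $\mathrm{Sym}^{2j}$ justification fill in details the paper leaves implicit. Drop your secondary remark that unbalanced terms integrate to zero, because it is false for $\mathrm{SU}(2)$: $g_{11}g_{22}=|g_{11}|^2$ has bidegree $(2,0)$, is invariant under the center and both torus actions, and has Haar average $1/2$ (likewise $\det g\equiv 1$), which is a consequence of the identity $\bar g=\sigma_y g\sigma_y$ you mention, so the proof should rest only on the definition as stated.
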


\begin{proof}
The condition follows immediately from the bidegree calculation~\eqref{eq:bidegree} and the $t$-design definition~\eqref{eq:tdesign_def}; the discrete sum equals the Haar integral whenever the integrand is a polynomial of bidegree at most $(t, t)$ in $(g, \bar g)$.
\end{proof}

Since the truncation restricts $J \le j_{\mathrm{cut}}$, each edge spin satisfies $j_e \le j_{\mathrm{max}}$, and the maximum total angular momentum at a vertex with $k$ incident edges is $J_{\mathrm{max}} = k \cdot j_{\mathrm{max}}$, the bound~\eqref{eq:t_requirement} becomes
\begin{equation}\label{eq:t_sufficient_general}
    t \;\ge\; \max\!\left(
    2 k_{\mathrm{in}}\, j_{\mathrm{max}},\;
    2 j_{\mathrm{cut}} + 2 k_{\mathrm{out}}\, j_{\mathrm{max}}
    \right).
\end{equation}
Setting $j_{\mathrm{cut}} = k \cdot j_{\mathrm{max}}$, which is the minimal value for which the syndrome extraction resolves all gauge-violation sectors (appendix~\ref{app:qft}), and noting that the second argument then dominates since $k = k_{\mathrm{in}} + k_{\mathrm{out}}$, a sufficient condition is
\begin{equation}\label{eq:t_sufficient}
    t \;\ge\; 2 k\, j_{\mathrm{max}}
    + 2 k_{\mathrm{out}}\, j_{\mathrm{max}}.
\end{equation}
For the single-plaquette geometry with $k = 2$, $k_{\mathrm{out}} = 1$, $j_{\mathrm{max}} = 1/2$, this gives $t \ge 3$. For a square lattice vertex with $k = 4$, $k_{\mathrm{out}} = 2$, $j_{\mathrm{max}} = 1/2$, the requirement is $t \ge 6$.

When the $t$-design condition is satisfied, the discrete and continuum protocols produce identical measurement statistics and identical post-measurement states on the data register, so the syndrome extraction is faithful. The size $n_t$ of the $t$-design determines the dimension of the ancilla register; a $t$-design on $\mathrm{SU}(2)$ requires at least $\sum_{j=0,\frac12,1,\ldots}^{t} (2j+1)^2 = O(t^3)$ elements by dimension counting of the Peter--Weyl basis up to spin $t$, so the ancilla overhead grows polynomially with the coordination number and $j_{\mathrm{max}}$.

\section{Truncation of the group quantum Fourier transform}
\label{app:qft}

The group quantum Fourier transform $\qft_{\mathrm{SU}(2)}$ as defined in~\eqref{eq:qft} acts on an infinite-dimensional Hilbert space, since the Peter--Weyl decomposition of $L^2(\mathrm{SU}(2))$ involves all spins $j \in \frac{1}{2} \mathbb{N}_0$. In this appendix we justify the truncation to $j \le j_{\mathrm{cut}}$ and characterize the resulting map.

The truncated QFT is defined by~\eqref{eq:qft_truncated}. Its output space is spanned by $\{\ket{j, m, n} : j \le j_{\mathrm{cut}},\, -j \le m, n \le j\}$ and has dimension
\begin{equation}\label{eq:dout}
    d_{\mathrm{out}} = \sum_{j=0,\frac12,1,\ldots}^{j_{\mathrm{cut}}} (2j+1)^2.
\end{equation}
For $j_{\mathrm{cut}} = 1/2$ this gives $d_{\mathrm{out}} = 1 + 4 = 5$. The input space is the $n_t$-dimensional ancilla register spanned by $\{\ket{g_1}, \ldots, \ket{g_{n_t}}\}$. For the truncated map to be well defined we need $n_t \ge d_{\mathrm{out}}$. When $n_t > d_{\mathrm{out}}$, the map has a non-trivial kernel; when $n_t = d_{\mathrm{out}}$, it is a square matrix and can be unitary if the $t$-design elements are chosen appropriately.

\begin{lemma}[Isometry property]
\label{lem:isometry}
Under the $t$-design condition of lemma~\ref{lem:tdesign}, the truncated QFT acts as an isometry from the ancilla register into the output subspace on all states within the truncated edge Hilbert space.
\end{lemma}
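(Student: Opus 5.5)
The plan is to show that the Gram matrix of the truncated transform's columns, restricted to what the data register can see, is the identity. Write $Q := \qft_{\mathrm{SU(2)}}^{\le j_{\mathrm{cut}}}$ from~\eqref{eq:qft_truncated}. For ancilla basis states $\ket{g_i},\ket{g_k}$ we have
\begin{equation}
\bra{g_i} Q^\dagger Q \ket{g_k} = \frac{1}{n_t}\sum_{j=0}^{j_{\mathrm{cut}}} (2j+1)\sum_{m,n}[\pi_j(g_i)]_{mn}\,\overline{[\pi_j(g_k)]_{mn}} = \frac{1}{n_t}\sum_{j\le j_{\mathrm{cut}}}(2j+1)\,\chi_j(g_i g_k^{-1}),
\end{equation}
which is a truncated Dirac delta on $\mathrm{SU}(2)$. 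This is not $\delta_{ik}$ for general design points, so $Q$ is not an isometry on the whole $n_t$-dimensional register. The qualifier ``on all states within the truncated edge Hilbert space'' is therefore essential. I would prove isometry on the states that actually occur in the protocol, namely the joint states~\eqref{eq:joint_state} with $\ket{\psi}$ supported on edge spins $j_e\le j_{\mathrm{max}}$.

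\textbf{Step 1: the input.} Take the Step~2 output $\ket{\Phi}=n_t^{-1/2}\sum_i\ket{g_i}\otimes U^{(v)}(g_i)\ket{\psi}$ and compute $\|(Q\otimes\id)\ket{\Phi}\|^2$. Using~\eqref{eq:after_qft}, this equals $\sum_{j\le j_{\mathrm{cut}}}\sum_{m,n}\|T^{(j)}_{mn}\ket{\psi}\|^2$.

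\textbf{Step 2: pass to the continuum.} By lemma~\ref{lem:tdesign}, each $T^{(j)}_{mn}$ with $j\le j_{\mathrm{cut}}$ coincides with its Haar version~\eqref{eq:T_haar}, and therefore has the form~\eqref{eq:T_form}. By~\eqref{eq:outcome_prob}, the squared norm becomes
\begin{equation}
\sum_{J\le j_{\mathrm{cut}}}\sum_{M,N}\frac{1}{2J+1}\bra{\psi}P^J_N\ket{\psi}=\sum_{J\le j_{\mathrm{cut}}}\bra{\psi}\Pi_J\ket{\psi},
\end{equation}
where $\Pi_J$ is the projector onto the $J$-isotypic component at $v$.

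\textbf{Step 3: completeness.} Within the truncated edge space the vertex total spin satisfies $J\le k\,j_{\mathrm{max}}=j_{\mathrm{cut}}$. The isotypic projectors therefore sum to the identity and the norm is $\|\psi\|^2=\|\Phi\|^2$. Polarization, applied to the same computation with two different inputs, gives preservation of inner products. To verify the inner products directly, one would use Haar orthogonality to check that the cross terms between different $(J,M)$ vanish.

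The main obstacle is stating precisely on which subspace $Q$ is isometric and showing that this is consistent with the lemma's claim. The domain is $\mathcal{S}=\mathrm{span}\{\sum_i\ket{g_i}\otimes U^{(v)}(g_i)\ket{\psi}\}$, not the bare ancilla. I would check carefully that the $t$-design degree bound~\eqref{eq:t_requirement} with $J=j_{\mathrm{cut}}$ is exactly what is needed for Step~2 to hold for every $j\le j_{\mathrm{cut}}$, since a lower degree would only give approximate orthogonality. The extension to a unitary on the full register, mentioned after~\eqref{eq:qft_truncated}, then follows by completing an orthonormal basis of the image, and needs no further argument.
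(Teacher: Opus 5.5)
Your argument is correct, but it does not follow the paper's route.

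The paper stays on the bare ancilla. It computes the same column Gram matrix $\bra{g_i}Q^\dagger Q\ket{g_k}=\frac{1}{n_t}\sum_{j\le j_{\mathrm{cut}}}(2j+1)\chi_j(g_ig_k^{-1})$ and appeals to Peter--Weyl completeness, asserting that the $t$-design condition turns this truncated delta into orthonormality, with the factor $n_t$ absorbed by the normalisation. Your diagnosis of that step is right. The diagonal entries are $d_{\mathrm{out}}/n_t$, and for $n_t>d_{\mathrm{out}}$ (which the degree requirement forces in the paper's examples) $Q$ has a non-trivial kernel, so it cannot be an isometry on the full register.

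You instead prove norm preservation of $Q\otimes\id$ on the protocol subspace $\mathcal{S}$, via $\sum\|T^{(j)}_{mn}\psi\|^2=\sum_{J\le j_{\mathrm{cut}}}\bra{\psi}\Pi_J\ket{\psi}$, completeness of the isotypic projectors, and polarisation. This buys several things. It gives a rigorous statement on a precisely identified domain. It exposes the extra hypothesis $j_{\mathrm{cut}}\ge k\,j_{\mathrm{max}}$ of~\eqref{eq:jcut_condition}, which is not part of the $t$-design condition but is necessary, since otherwise the $J$-sum is incomplete. It also shows along the way that the probabilities~\eqref{eq:outcome_prob} sum to one.

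For comparison, the ancilla-level fact that does hold is row orthonormality, $QQ^\dagger=\id_{d_{\mathrm{out}}}$, which follows from Schur orthogonality as soon as $t\ge 2j_{\mathrm{cut}}$. Then $Q$ is isometric on the span of the vectors $\sum_i[\pi_j(g_i)]_{mn}\ket{g_i}$ with $j\le j_{\mathrm{cut}}$. That span contains the ancilla support of $\mathcal{S}$ once $j_{\mathrm{cut}}\ge k\,j_{\mathrm{max}}$. This recovers your conclusion under a weaker design requirement than you expected to need. Your $T$-operator route, under the full condition of lemma~\ref{lem:tdesign}, additionally certifies the form~\eqref{eq:T_form} of the post-measurement operators.

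One correction to your final sentence. It is row orthonormality, not isometry on $\mathcal{S}$, that licenses appending rows to $Q$ to obtain a unitary on the $n_t$-dimensional register. So that step needs the extra check above. Alternatively, you can settle for a unitary that merely agrees with $Q$ on the ancilla support of $\mathcal{S}$, which is all the protocol uses.
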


\begin{proof}
Consider the inner product of two output states,
\begin{eqnarray}\label{eq:inner_product}
    \bra{g_i} \bigl(\qft^{\le j_{\mathrm{cut}}}\bigr)^\dagger
    \qft^{\le j_{\mathrm{cut}}} \ket{g_k}
    &=& \sum_{j=0}^{j_{\mathrm{cut}}} \frac{2j + 1}{n_t}
    \sum_{m, n} [\pi_j(g_i)]_{m n}\,
    [\overline{\pi_j(g_k)}]_{m n} \nonumber \\
    &=& \sum_{j=0}^{j_{\mathrm{cut}}} \frac{2j + 1}{n_t}\,
    \chi_j\bigl(g_i g_k^{-1}\bigr),
\end{eqnarray}
where $\chi_j(g) = \tr[\pi_j(g)]$ is the character of the spin-$j$ representation. In the continuum ($j_{\mathrm{cut}} \to \infty$), the Peter--Weyl completeness relation gives $\sum_j (2j+1)\, \chi_j(g_i g_k^{-1}) = \delta(g_i, g_k)$, recovering the orthonormality of the distributional states $\langle g | g' \rangle = \delta(g, g')$. On the discrete ancilla register the $t$-design condition replaces the Haar integral with a finite sum, introducing a factor of $n_t$ that is absorbed by the $1/\sqrt{n_t}$ prefactor in~\eqref{eq:qft_truncated}.
\end{proof}

The syndrome extraction does not require the full delta function. The operator $T^{(J)}_{MN}$ acts within the $J$-isotypic component of $\mathcal{H}_v$, and the vertex Hilbert space is itself truncated to $j_e \le j_{\mathrm{max}}$ on each edge. The Clebsch--Gordan decomposition at a vertex with $k$ edges carrying spins at most $j_{\mathrm{max}}$ produces total angular momenta $J_v \le k\, j_{\mathrm{max}}$. Therefore the syndrome extraction is exact provided
\begin{equation}\label{eq:jcut_condition}
    j_{\mathrm{cut}} \ge k\, j_{\mathrm{max}},
\end{equation}
which ensures that all gauge-violation sectors that can arise from the truncated edge Hilbert spaces are resolved by the measurement. For the single-plaquette geometry with $k = 2$ and $j_{\mathrm{max}} = 1/2$ this gives $j_{\mathrm{cut}} \ge 1$, so the truncated QFT with $j_{\mathrm{cut}} = 1$ is sufficient.

When $n_t > d_{\mathrm{out}}$ the truncated QFT is a rectangular matrix of size $d_{\mathrm{out}} \times n_t$. To implement it as a quantum circuit we embed it into a unitary on the full $n_t$-dimensional ancilla register. Let $W$ denote the $d_{\mathrm{out}} \times n_t$ matrix of the truncated QFT. Since $W$ is an isometry by lemma~\ref{lem:isometry}, with $W W^\dagger = \id_{d_{\mathrm{out}}}$ up to truncation corrections, we extend it to a unitary $\tilde W$ of size $n_t \times n_t$ by appending $n_t - d_{\mathrm{out}}$ orthonormal rows that span the orthogonal complement of the range of $W$. The choice of these additional rows does not affect the syndrome extraction; measurement outcomes in the $\ket{j, m, n}$ basis with $j \le j_{\mathrm{cut}}$ are determined entirely by $W$, and outcomes corresponding to the orthogonal complement indicate that the ancilla has landed outside the truncated Wigner basis. Such outcomes can be treated as a no-information result, after which no recovery is applied. For the parameter regime used in this work, with $j_{\mathrm{cut}} \ge k\, j_{\mathrm{max}}$, such outcomes have exactly zero probability.

\section{Recovery operator and basis assignment on the single plaquette}
\label{app:recovery}

In this appendix we describe the construction of the recovery operator on the single-plaquette geometry, including the Clebsch--Gordan decomposition at coordination-two vertices and the spectator-preserving basis assignment that ensures consistency between syndrome extraction and recovery.

\subsection*{Clebsch--Gordan decomposition at coordination-two vertices}

At each vertex of the single plaquette the gauge transformation $U^{(v)}(h)$ acts on two indices, namely the $m$ index of the outgoing edge via the left regular representation and the $n$ index of the incoming edge via the right regular representation. The vertex Hilbert space for fixed edge spins $j_{\mathrm{out}}$ and $j_{\mathrm{in}}$ is therefore $V_{j_{\mathrm{out}}} \otimes V_{j_{\mathrm{in}}}^*$. Since $V_j^* \cong V_j$ for SU(2) this is equivalent to coupling two angular momenta. For $j_{\mathrm{max}} = 1/2$ the possible combinations and their Clebsch--Gordan decompositions are listed in table~\ref{tab:cg_coord2}. The singlet $J = 0$ sector appears in the first and fourth rows, each with multiplicity $\mu_0 = 1$. These live in different edge-spin blocks that are not mixed by gauge transformations.

\begin{table}[h]
\centering
\caption{Clebsch--Gordan decomposition at a coordination-two vertex for $j_{\mathrm{max}} = 1/2$.}
\label{tab:cg_coord2}
\begin{tabular}{@{}ccc@{}}
\hline
$j_{\mathrm{out}}$ & $j_{\mathrm{in}}$ & Decomposition \\
\hline
$0$   & $0$   & $J = 0$ \quad ($\mu_0 = 1$) \\
$0$   & $1/2$ & $J = 1/2$ \quad ($\mu_{1/2} = 1$) \\
$1/2$ & $0$   & $J = 1/2$ \quad ($\mu_{1/2} = 1$) \\
$1/2$ & $1/2$ & $J = 0 \oplus J = 1$ \quad
($\mu_0 = 1,\, \mu_1 = 1$) \\
\hline
\end{tabular}
\end{table}

The vertex Hilbert space $\mathcal{H}_v$ collects only the indices that transform under the gauge action at $v$, which are $V_{j_{\mathrm{out}}}$ from the outgoing edge and $V_{j_{\mathrm{in}}}^*$ from the incoming edge. Summing over all edge-spin assignments, the total dimension is $1 + 2 + 2 + 4 = 9$. The $J$ sectors across this nine-dimensional space are $J = 0$ with total multiplicity $2$ from the $(0, 0)$ and $(1/2, 1/2)$ blocks, $J = 1/2$ with total multiplicity $2$ from the $(0, 1/2)$ and $(1/2, 0)$ blocks each carrying $2J + 1 = 2$ states, and $J = 1$ with multiplicity $1$ from the $(1/2, 1/2)$ block carrying $2J + 1 = 3$ states. This accounts for all $2 + 4 + 3 = 9$ states.

In practice the CG basis is constructed numerically by building the gauge generators $G_a^{(v)}$ ($a = x, y, z$) at each vertex, forming the Casimir operator $C^{(v)} = \sum_a (G_a^{(v)})^2$, and diagonalizing $C^{(v)}$ and $G_z^{(v)}$ simultaneously. The eigenvalues of $C^{(v)}$ are $J(J + 1)$, giving the $J$ quantum number, and the eigenvalues of $G_z^{(v)}$ within each $J$ sector give $M$. The gauge generators act on the five-dimensional edge space as follows. On the outgoing edge the generator of left multiplication $L_a$ acts on the $m$ index within each $(j, n)$ block,
\begin{equation}
    L_a \ket{j, m, n} = \sum_{m'} [\mathcal{J}_a^{(j)}]_{m' m}
    \ket{j, m', n},
\end{equation}
where $\mathcal{J}_a^{(j)}$ is the spin-$j$ representation of the $\mathfrak{su}(2)$ generator $J_a$. Note that because $L_h f(g) = f(h^{-1} g)$, the generators of $L$ carry a sign relative to the representation matrices, $\mathcal{J}_a^{(L)} = -J_a^*$. On the incoming edge the generator of right multiplication $R_a$ acts on the $n$ index,
\begin{equation}
    R_a \ket{j, m, n} = \sum_{n'} [J_a^{(j)}]_{n' n}
    \ket{j, m, n'}.
\end{equation}
The full gauge generator at vertex $v$ is $G_a^{(v)} = L_a^{(e_{\mathrm{out}})} \otimes \id_{e_{\mathrm{in}}} + \id_{e_{\mathrm{out}}} \otimes R_a^{(e_{\mathrm{in}})}$, embedded into the full $625$-dimensional space by tensoring with the identity on the remaining two edges.

\subsection*{Recovery operator construction}

The gauge cooling recovery operator $R_{J, M}$ maps the subspace $\mathcal{W}_M^J$ back to $\mathcal{W}_0^0$ at each vertex. For the single-plaquette geometry, where $\mu_0 = 1$ within each edge-spin block, the recovery is uniquely determined up to a phase; it maps the single basis vector in $\mathcal{W}_M^J$, for a given edge-spin assignment, to the single singlet state in $\mathcal{W}_0^0$.

The construction proceeds as follows. Having obtained the CG basis by diagonalizing the Casimir and $G_z$, we organize the basis vectors by $J$ sector. Within each $J$ sector, the $M$ values range from $-J$ to $J$, and we need the basis vectors at different $M$ values to be consistently related by the gauge raising and lowering operators. This ensures that the recovery operator preserves the multiplicity structure. Starting from the lowest-weight states $M = -J$ within each $J$ sector, we apply the raising operator $G_+ = G_x + i G_y$ to generate all higher-$M$ states,
\begin{equation}\label{eq:raising}
    \ket{J, M+1, \alpha}
    = \frac{G_+ \ket{J, M, \alpha}}
    {\sqrt{J(J+1) - M(M+1)}},
\end{equation}
where $\alpha$ labels the multiplicity. This produces a consistently labeled basis $\{\ket{J, M, \alpha}\}$ in which states at different $M$ with the same $\alpha$ are related by the standard angular momentum algebra. The Kraus operators for the gauge cooling channel then take the form~\eqref{eq:kraus_gc} with the same multiplicity indexing.

In practice, the recovery can be further refined by matching each non-singlet state $\ket{J, N, \alpha}$ to the singlet state $\ket{0, 0, \alpha'}$ that shares the same spectator quantum numbers, namely the edge-spin labels and indices not acted on by the gauge transformation at $v$. This spectator-preserving assignment minimizes changes to the edge states, reducing disruption to neighboring vertices that share edges and improving the convergence of the iterative sweep.

\section{Group algebra basis and Wigner basis}
\label{app:bases}

We close with a remark on the relationship between the two bases that appear in the syndrome extraction protocol; the group algebra basis $\ket{g}$ used in steps 1 and 2 and the Wigner basis $\ket{j, m, n}$ used in the measurement of step 4. A state $\psi \in L^2(\mathrm{SU}(2))$ is a square-integrable function $\psi : \mathrm{SU}(2) \to \mathbb{C}$. The Peter--Weyl theorem provides a complete orthonormal basis for this space in terms of the normalized Wigner matrix elements $\sqrt{2j+1}\, [\pi_j(g)]_{m n}$, so that any state can be expanded as
\begin{equation}\label{eq:pw_expansion}
    \psi(g) = \sum_j \sqrt{2j+1}
    \sum_{m, n = -j}^{j} \hat\psi^{(j)}_{m n}\,
    D^{(j)}_{m n}(g),
\end{equation}
where the Peter--Weyl coefficients are given by
\begin{equation}\label{eq:pw_coefficients}
    \hat\psi^{(j)}_{m n}
    = \sqrt{2j+1} \int_{\mathrm{SU}(2)} \psi(g)\,
    \overline{D^{(j)}_{m n}(g)}\, dg.
\end{equation}
The group quantum Fourier transform $\qft_{\mathrm{SU}(2)}$ is the unitary change of basis from the group element picture to the representation picture; it takes a state described by its values $\psi(g)$ and re-expresses it in terms of its Peter--Weyl coefficients $\hat\psi^{(j)}_{m n}$.

The group element $\ket{g}$ is not a state in $L^2(\mathrm{SU}(2))$ but rather a distributional object satisfying $\langle g | g' \rangle = \delta(g, g')$ with respect to the Haar measure, analogous to the position eigenstate $\ket{x}$ in ordinary quantum mechanics. The value of a state at a group element is recovered via $\psi(g) = \langle g | \psi \rangle$, exactly as $\psi(x) = \langle x | \psi \rangle$ in the position representation. The analogy with ordinary quantum mechanics is precise; the position basis $\ket{x}$ corresponds to the group element basis $\ket{g}$, the momentum basis $\ket{p}$ corresponds to the Wigner basis $\ket{j, m, n}$, the Fourier transform corresponds to $\qft_{\mathrm{SU}(2)}$, and the zero-momentum state $\ket{p = 0} = \int \ket{x}\, dx$ corresponds to the gauge-invariant state $\ket{0, 0, 0} = \int \ket{g}\, dg$. A uniform superposition over all group elements corresponds to a function that is constant on $\mathrm{SU}(2)$. Its Peter--Weyl expansion has support only on $j = 0$, since the constant function is precisely the trivial representation. In the Wigner basis this state is $\ket{j = 0, m = 0, n = 0}$.

In the syndrome extraction circuit this duality is exploited as follows. The preparation $F_G$ creates a uniform superposition over group elements, or in the truncated setting over a $t$-design, expressing the $J_v = 0$ state in the group algebra basis. This basis is chosen because the controlled gauge action $\sum_i \ket{g_i}\!\bra{g_i} \otimes U^{(v)}(g_i)$ is diagonal in the group element basis, making the controlled operation straightforward to implement. The $\qft_{\mathrm{SU}(2)}$ then transforms the ancilla to the Wigner basis, where the measurement in $\ket{j, m, n}$ extracts the syndrome $(J, M, N)$. The circuit thus never works in a single basis throughout; it uses the group algebra basis where the controlled-$U$ is natural and the Wigner basis where the measurement is natural, with the group QFT serving as the bridge between the two.

\end{document}